\newcommand{\BlackBoxes}{\global\overfullrule5pt}
\newcommand{\R}{\mathbb{R}}
\newcommand{\N}{\mathbb{N}}
\newcommand{\Q}{\mathbb{Q}}
\newcommand{\A}{\mathcal{A}}
\newcommand{\B}{\mathcal{B}}
\newcommand{\BB}{\mathbb{B}}
\newcommand{\E}{\mathbb{E}}
\newcommand{\F}{\mathcal{F}}
\renewcommand{\H}{\mathcal{H}}
\newcommand{\M}{\mathcal{M}}
\renewcommand{\P}{\mathbb{P}}
\newcommand{\Qc}{\mathcal{Q}}
\newcommand{\Qf}{\mathfrak{Q}}
\newcommand{\T}{\mathcal{T}}
\newcommand{\1}{\mathbbm{1}}
\newcommand{\q}{F^{-1}}
\DeclareMathOperator{\argmin}{argmin}
\DeclareMathOperator{\dif}{d}
\DeclareMathOperator{\ES}{ES}
\DeclareMathOperator{\epi}{epi}
\DeclareMathOperator{\esssup}{ess\,sup}
\DeclareMathOperator{\id}{id}
\DeclareMathOperator{\VaR}{VaR}
\newcommand{\ubar}[1]{\underaccent{\bar}{#1}}
\newtheorem{theorem}{Theorem}
\newtheorem{lemma}[theorem]{Lemma}
\newtheorem{proposition}[theorem]{Proposition}
\theoremstyle{definition}
\newtheorem{example}[theorem]{Example}
\newtheorem{remark}[theorem]{Remark}
\newtheorem{definition}[theorem]{Definition}
\newtheorem{assumption}[theorem]{Assumption}
\numberwithin{equation}{section} \numberwithin{theorem}{section}
\def\0{\kern0pt\-\nobreak\hskip0pt\relax}
\def\makeoverbar#1#2#3#4#5#6#7{ \setbox0=\hbox{$\m@th#2\mkern#5mu{{}#3{}}\mkern#6mu$} \setbox1=\null \dimen@=#4\fontdimen8#13 \dimen@=3.5\dimen@
\advance\dimen@ by \ht0 \dimen@=-#7\dimen@ \advance\dimen@ by \wd0
\ht1=\ht0 \dp1=\dp0 \wd1=\dimen@
\dimen@=\fontdimen8#13 \fontdimen8#13=#4\fontdimen8#13
\rlap{\hbox to \wd0{$\m@th\hss#2{\overline{\box1}}\mkern#5mu$}}
\fontdimen8#13=\dimen@}
\def\mylabel#1#2{{\def\@currentlabel{#2}\label{#1}}}
\begin{document}
\title{Dynamic reinsurance in discrete time minimizing the insurer's cost of capital}

\author[A. \smash{Glauner}]{Alexander Glauner${}^\ast$}


\thanks{${}^\ast$ Department of Mathematics, Karlsruhe Institute of Technology (KIT), D-76128 Karlsruhe, Germany,\\ \href{mailto:alexander.glauner@kit.edu} {alexander.glauner@kit.edu}}

\begin{abstract}
	In the classical static optimal reinsurance problem, the cost of capital for the insurer's risk exposure determined by a monetary risk measure is minimized over the class of reinsurance treaties represented by increasing Lipschitz retained loss functions. In this paper, we consider a dynamic extension of this reinsurance problem in discrete time which can be viewed as a risk-sensitive Markov Decision Process. The model allows for both insurance claims and premium income to be stochastic and operates with general risk measures and premium principles. We derive the Bellman equation and show the existence of a Markovian optimal reinsurance policy. Under an infinite planning horizon, the model is shown to be contractive and the optimal reinsurance policy to be stationary. The results are illustrated with examples where the optimal policy can be determined explicitly.
\end{abstract}
\maketitle

\makeatletter \providecommand\@dotsep{5} \makeatother

\vspace{0.5cm}
\begin{minipage}{14cm}
{\small
\begin{description}
\item[\rm \textsc{ Key words}]
{\small Optimal reinsurance, risk measure, cost of capital, risk-sensitive Markov Decision Process}
\item[\rm \textsc{AMS subject classifications}] 
{\small 91G05, 91G70, 90C40}

\end{description}
}
\end{minipage}

\section{Introduction}

Reinsurance is an important instrument for insurance companies to reduce their risk exposure. The optimal design of reinsurance contracts has been studied extensively in the actuarial literature for more than half a century. In their pioneering works, \cite{Borch1960} and \citet{Arrow1963} considered the variance of the retained risk and the exponential utility of terminal wealth as optimality criteria. Since then, a wide range of objectives for the optimal use of reinsurance has been discussed. For a comprehensive literature overview we refer the interested reader to Chapter 8 of \citet{Albrecher2017}. 

Developments in the insurance sector's regulatory framework like Solvency II or the Swiss Solvency Test led to a special interest in monetary risk measures like Value-at-Risk and Expected Shortfall as objective functionals. Economically speaking, the target is to minimize the capital requirement or equivalently the cost of capital for the effective risk after reinsurance which is calculated by the respective monetary risk measure. This line of research has been initiated by \citet{CaiTan2007} who optimized the retention levels of stop-loss contracts under the expected premium principle with respect to Value-at-Risk and Expected Shortfall. \citet{Cai2008} extended the same setting to the non-parametric class of increasing convex reinsurance treaties. A further step to general premium principles has been made by \citet{ChiTan2013} who considered the now standard class of increasing Lipschitz reinsurance contracts. Subsequently, more general risk measures were considered. \citet{Cui2013} were the first to study arbitrary distortion risk measures. Similar findings were reached by \citet{Zhuang2016} using their less technical marginal indemnification function approach. Other extensions of the cost of capital minimization problem concerned additional constraints, see e.g.\ \citet{Lo2017} or multidimensional settings induced by a macroeconomic perspective, see \citet{BauerleGlauner2018}.

Research on the cost of capital minimization problem has so far been focused on static single-period models. Other optimality criteria for the choice of reinsurance have however been considered in dynamic setups. \citet{Schael2004} studied the control of an insurer's surplus process in discrete time by means of investment and reinsurance using general parametric contracts. Optimality criteria were maximization of lifetime dividends and minimization of the ruin probability. The continuous-time versions of these problems gained greater attention in the literature. For an overview we refer to \citet{AlbrecherThonhauser2009} and the books \citet{Schmidli2008}, \citet{AzcureMuler2014}. Several authors used Value-at-Risk and Expected Shortfall based solvency constraints in continuous time reinsurance models, see \citet{Chen2010} and \citet{Liu2013} for two early contributions.

The only study of solvency capital requirements or corresponding cost of capital as optimization target in a dynamic setup is the recent paper by \citet{BauerleGlauner2020c}. They minimized the cost of capital for the discrete-time total discounted loss determined by a general spectral risk measure over the class of increasing Lipschitz retained loss functions from which the insurer selects a treaty in every period depending on the current surplus. Since reinsurance
treaties are typically written for one year \citep{Albrecher2017}, only modeling in discrete time is realistic. Continuous time models are typically used when then insurer's surplus is managed by both reinsurance and capital market instruments. They realistically describe the financial market, but with regard to reinsurance they are only a compromise.  

The aim of this paper is to introduce another dynamic extension in discrete time of the static cost of capital minimization problem. We propose a recursive approach where in the terminal period the insurer faces the static problem and in any earlier period calculates the capital requirement taking into account the period's retained loss, the cost of reinsurance, and the future cost of capital. The latter is a random quantity depending on the development of the future surplus. The recursive minimization of risk measures has been studied by \citet{AsienkiewiczJaskiewicz2017} for an abstract Markov Decision Process and by \citet{BaeuerleJaskiewicz2017,BaeuerleJaskiewicz2018} for a dividend and an optimal growth problem specifically using the entropic risk measure. This choice is motivated by recursive utilities studied extensively in the economic literature since the entropic risk measure happens to be the certainty equivalent of an exponential utility. \citet{BauerleGlauner2020b} generalized the recursive approach to axiomatically characterized general monetary risk measures in an abstract Markov decision model. Here, we show that the approach is well-suited for a dynamic cost of capital optimization. 

The paper is structured as follows: In Section \ref{sec:risk_measures} we recall some important facts about risk measures and premium principles. Then, we introduce the dynamic reinsurance model. The recursive cost of capital minimization problem is solved in Section \ref{sec:finite} under a finite planning horizon. We derive a Bellman equation for the optimal cost of capital and show that there exists a Markovian optimal reinsurance policy only requiring that the monetary risk measures have the Fatou property. Under an infinite planning horizon, we additionally need coherence and can then show that the model is contractive and the optimal reinsurance policy stationary. Addressing a criticism by \citet[Sec.\ 8.4]{Albrecher2017}, who question the suitability of cost of capital minimization as a business objective, we show in Section \ref{sec:connection} that the recursive cost of capital minimization is consistent with profit maximization, the primary target of any company. In Section \ref{sec:examples}, we illustrate our results with analytic examples. We specifically consider Value-at-Risk due to its practical relevance with regard to Solvency II.

\section{Risk measures and premium principles}\label{sec:risk_measures}

Let a probability space $(\Omega, \A, \P)$ and a real number $p \in [1,\infty)$ be fixed. With $q \in (1,\infty]$ we denote the conjugate index satisfying $\frac1p + \frac1q=1$ under the convention $\frac1\infty=0$. Henceforth, $L^p=L^p(\Omega, \A, \P)$ denotes the vector space of real-valued random variables which have an integrable $p$-th moment. $L^p_+$ is the subset of non-negative random variables. We follow the convention of the actuarial literature that positive realizations of random variables represent losses and negative ones gains. A \emph{risk measure} is a functional $\rho : L^p \to \bar{\R}$. The notion of a \emph{premium principle} $\pi: L^p_+ \to  \bar{\R}$ is mathematically closely related but the applications are different. While the former determines the necessary solvency capital to bear a risk, the latter gives the price of (re)insuring it. In contrast to general financial risks, insurance risks are typically non-negative. Hence, it suffices to consider premium principles on $L^p_+$. The properties of risk measures discussed in the sequel apply to premium principles analogously.

\begin{definition}\label{def:rm_properties}
	A risk measure $\rho : L^p \to \bar{\R}$ is
	\begin{enumerate}
		\item \emph{law-invariant} if $\rho(X)=\rho(Y)$ for $X,Y$ with the same distribution.
		\item \emph{monotone} if $X\leq Y$ implies $\rho(X) \leq \rho(Y)$.
		\item \emph{translation invariant} if $\rho(X+m)=\rho(X)+m$ for all $m \in \R$.
		\item \emph{normalized} if $\rho(0)=0$.
		\item \emph{finite} if $\rho(L^p) \subseteq \R$.
		\item \emph{positive homogeneous} if $\rho(\lambda X)=\lambda\rho(X)$ for all $\lambda \in \R_+$.
		\item \emph{convex} if $\rho(\lambda X+(1-\lambda)Y)\leq \lambda\rho(X)+(1-\lambda)\rho(Y)$ for $\lambda \in [0,1]$.
		\item \emph{subadditive} if $\rho(X+Y)\leq \rho(X)+\rho(Y)$ for all $X,Y$.
		\item said to have the \emph{Fatou property}, if for every sequence $\{X_n\}_{n \in \N} \subseteq L^p$ with $|X_n| \leq Y$ $\P$-a.s.\ for some $Y \in L^p$ and $X_n \to X$ $\P$-a.s.\ for some $X \in L^p$ it holds
		\[ \liminf_{n \to \infty} \rho(X_n) \geq \rho(X). \]
	\end{enumerate}
\end{definition}

Throughout, we only consider law-invariant risk measures and premium principles. A risk measure is called \emph{monetary} if it is monotone and translation invariant. It appears to be consensus in the literature that these two properties are a necessary minimal requirement for any risk measure. However, the attribute monetary is rather unusual for premium principles since most of them are monotone but often not translation invariant. Monetary risk measures which are additionally positive homogeneous and subadditive are referred to as \emph{coherent}. Note that positive homogeneity implies normalization and makes convexity and subadditivity equivalent. The Fatou property means that the risk measure is lower semicontinuous w.r.t.\ dominated convergence. The following result can be found in \citet{Rueschendorf2013} as Theorem 7.24.

\begin{lemma}\label{thm:finite_convex_fatou}
	Finite and convex monetary risk measures have the Fatou property.
\end{lemma}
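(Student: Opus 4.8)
The plan is to show that a finite convex monetary risk measure $\rho$ on $L^p$ is automatically lower semicontinuous with respect to dominated a.s.\ convergence, exploiting the interplay between convexity, monotonicity, and the order structure of $L^p$. The standard route, which I would follow, is to first establish a local Lipschitz-type bound: a finite convex monotone functional on $L^p$ is continuous on the interior of its domain, which here is all of $L^p$. I would invoke the classical fact that a convex function on a Banach space which is bounded above on a neighborhood of a point is continuous there; so the real work is to show $\rho$ is bounded above on some ball, and then leverage finiteness plus convexity to propagate this.

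First I would fix attention on the closed unit ball $B$ of $L^p$ and argue that $\rho$ is bounded above on $B$. Suppose not; then there is a sequence $X_n \in B$ with $\rho(X_n) \to \infty$. By translation invariance and monotonicity one has $\rho(X) \le \rho(|X|) \le \rho(\|X\|_\infty)$ when $X$ is bounded, but in $L^p$ the elements of $B$ need not be bounded, so instead I would use a Baire category / gliding hump argument: write $X_n$ in terms of a fast-decaying series $\sum 2^{-n} X_n$, which converges in $L^p$ to some $Z$, and use convexity of $\rho$ together with finiteness $\rho(Z) < \infty$ to derive $\rho(2^n Z - (\text{bounded correction})) \ge$ something forcing a contradiction with $\rho(Z)<\infty$. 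This is the standard Komlós/gliding-hump trick used to prove that finite convex risk measures are continuous from below; I expect this to be the main obstacle, since one must be careful that the partial sums stay comparable to $Z$ in the a.s.\ order so that monotonicity applies.

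Once $\rho$ is bounded above on a ball, convexity gives that $\rho$ is continuous (indeed locally Lipschitz) on all of $L^p$ in the $\|\cdot\|_p$-norm. In particular $\rho$ is norm-lower-semicontinuous. To upgrade from norm-lsc to the Fatou property, I would take a sequence $X_n \to X$ $\P$-a.s.\ with $|X_n| \le Y \in L^p$. Dominated convergence gives $X_n \to X$ in $L^p$-norm, and then norm-continuity of $\rho$ yields $\rho(X_n) \to \rho(X)$, so in particular $\liminf_n \rho(X_n) \ge \rho(X)$. (One could also route through the dual representation: a finite convex monetary $\rho$ has a representation $\rho(X) = \sup_{Q \in \mathcal{Q}} \big(\E_Q[X] - \alpha(Q)\big)$ over a set of densities in $L^q$, and each $X \mapsto \E_Q[X]$ is Fatou by the classical Fatou lemma, so the supremum is lsc; but establishing the dual representation already requires the continuity argument above, so the direct route is cleaner.)

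In short, the skeleton is: (i) $\rho$ finite and convex $\Rightarrow$ $\rho$ bounded above locally (gliding-hump/Baire argument, the crux); (ii) convex $+$ locally bounded above $\Rightarrow$ norm-continuous on $L^p$; (iii) $L^p$-dominated a.s.\ convergence implies $L^p$-norm convergence; (iv) combine (ii) and (iii) to get $\rho(X_n)\to\rho(X)$, which is stronger than the Fatou inequality. Since the statement is quoted from \citet[Thm.~7.24]{Rueschendorf2013}, I would present this as a sketch and refer there for the full gliding-hump details.
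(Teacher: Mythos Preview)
The paper does not give its own proof of this lemma; it simply states the result and cites \citet[Thm.~7.24]{Rueschendorf2013}. Your proposal does the same in the end, but additionally sketches the standard argument behind that theorem, and the sketch is correct: the crux is indeed that a finite convex monetary risk measure on $L^p$ is norm-continuous (the extended Namioka--Klee type result you outline in steps (i)--(ii)), after which the Fatou property is immediate because dominated a.s.\ convergence implies $L^p$-norm convergence (your step (iii)). So your approach is fully consistent with the paper's---both defer to R\"uschendorf---and you go a bit further by indicating why the cited result holds.
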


\citet{Pichler2013} showed that coherent risk measures satisfy a triangular inequality.

\begin{lemma}\label{thm:coherent_triangular}
	For a coherent risk measure $\rho$ and $X,Y \in L^p$ it holds
	\[ \left\lvert \rho(X) - \rho(Y) \right\rvert \leq \rho(|X-Y|). \]
\end{lemma}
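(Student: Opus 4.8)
The plan is to derive the inequality directly from the pointwise relation $X \le Y + |X-Y|$, using only the two properties of coherence beyond positive homogeneity — monotonicity and subadditivity — with positive homogeneity entering merely through the normalization $\rho(0)=0$ that it entails.

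First I would observe that $X - Y \le |X-Y|$ $\P$-a.s., hence $X \le Y + |X-Y|$ $\P$-a.s. Under the actuarial sign convention (positive realizations are losses) this is precisely the order on $L^p$ to which monotonicity refers, so $\rho(X) \le \rho\big(Y+|X-Y|\big)$; applying subadditivity then gives $\rho\big(Y+|X-Y|\big) \le \rho(Y) + \rho(|X-Y|)$, and chaining the two estimates yields $\rho(X) - \rho(Y) \le \rho(|X-Y|)$. Interchanging the roles of $X$ and $Y$ and using $|Y-X| = |X-Y|$, the same two steps give $\rho(Y) - \rho(X) \le \rho(|X-Y|)$; combining the two one-sided bounds produces $|\rho(X) - \rho(Y)| \le \rho(|X-Y|)$, which is the claim.

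There is no genuine obstacle here: the argument is a two-line computation once the right pointwise inequality is identified. The only points needing a moment's care are that monotonicity must be invoked in the correct (loss-)direction, and that $\rho$ is $\bar{\R}$-valued. For the latter, monotonicity together with $\rho(0)=0$ gives $\rho(|X-Y|)\ge 0$, so the asserted bound is trivially true whenever $\rho(|X-Y|)=+\infty$, and in the complementary case the chain of inequalities above keeps the one-sided differences well defined under the usual conventions on $\bar{\R}$.
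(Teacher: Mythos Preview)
Your argument is correct. The paper does not actually supply its own proof of this lemma; it merely attributes the result to \citet{Pichler2013}. Your direct derivation from monotonicity and subadditivity via the pointwise inequality $X \le Y + |X-Y|$ is the standard elementary proof, and your remark handling the case $\rho(|X-Y|)=+\infty$ appropriately addresses the fact that $\rho$ is $\bar\R$-valued.
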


We denote by $\M_1(\Omega,\A,\P)$ the set of probability measures on $(\Omega,\A)$ which are absolutely continuous with respect to $\P$ and define
\[ \M_1^q(\Omega,\A,\P) = \left\{ \Q \in\M_1(\Omega,\A,\P): \frac{\dif \Q}{\dif \P} \in L^{q}(\Omega,\A,\P) \right\}. \]
Recall that an extended real-valued convex functional is called \emph{proper} if it never attains $-\infty$ and is strictly smaller than $+\infty$ in at least one point. Coherent risk measures have the following dual or robust representation, cf.\ Theorem 7.20 in \citet{Rueschendorf2013}. 

\begin{proposition}\label{thm:coherent_risk_measure_dual} 
	A functional $\rho: L^p \to \bar \R$ is a proper coherent risk measure with the Fatou property if and only if there exists a subset $\Qc \subseteq \M_1^q(\Omega,\A,\P)$ such that
	\begin{align*}
	\rho(X)= \sup_{\Q \in \Qc} \E^\Q[X], \qquad X \in L^p.
	\end{align*}
	The supremum is attained since the subset $\Qc \subseteq \M_1^q(\Omega,\A,\P)$ can be chosen $\sigma(L^q,L^p)$-compact and the functional $\Q \mapsto \E^\Q[X]$ is $\sigma(L^q,L^p)$-continuous.
\end{proposition}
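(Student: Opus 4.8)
The plan is to prove the two implications separately; the substance lies in the ``only if'' direction, which I would derive from convex duality for the dual pair $(L^p,L^q)$ together with the structural consequences of the coherence axioms.

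For the ``if'' direction, suppose $\rho(X)=\sup_{\Q\in\Qc}\E^\Q[X]$ for some non-empty $\Qc\subseteq\M_1^q(\Omega,\A,\P)$. Each functional $X\mapsto\E^\Q[X]=\E\!\left[X\,\tfrac{\dif\Q}{\dif\P}\right]$ is finite on $L^p$ by Hölder's inequality and is monotone, translation invariant, positive homogeneous and subadditive; all four properties are preserved by the pointwise supremum, so $\rho$ is coherent, and $\rho(0)=0$ together with $\rho>-\infty$ makes it proper. For the Fatou property, if $X_n\to X$ $\P$-a.s.\ with $|X_n|\le Y\in L^p$, then $Y\in L^1(\Q)$ for every $\Q\in\Qc$, so dominated convergence under $\Q$ gives $\E^\Q[X_n]\to\E^\Q[X]$; hence $\liminf_{n\to\infty}\rho(X_n)\ge\E^\Q[X]$ for each $\Q$ and, taking the supremum over $\Qc$, $\liminf_{n\to\infty}\rho(X_n)\ge\rho(X)$.

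For the ``only if'' direction, the first step --- and the one I expect to be the main obstacle --- is to upgrade the Fatou property to lower semicontinuity of $\rho$ with respect to the weak topology $\sigma(L^p,L^q)$. I would first show that the Fatou property forces $\|\cdot\|_p$-lower semicontinuity: from any $\|\cdot\|_p$-convergent sequence one extracts a subsequence converging $\P$-a.s.\ and dominated by an element of $L^p$ (choosing it so that the norms of successive increments are summable), applies the Fatou inequality along that subsequence, and removes the subsequence by the standard $\liminf$ argument. Since $\rho$ is convex (subadditivity plus positive homogeneity), its $\|\cdot\|_p$-closed sublevel sets are convex, hence $\sigma(L^p,L^q)$-closed, so $\rho$ is $\sigma(L^p,L^q)$-lower semicontinuous; being also proper, it equals its biconjugate by the Fenchel--Moreau theorem:
\[
\rho(X)=\sup_{Z\in L^q}\bigl(\E[XZ]-\rho^\ast(Z)\bigr),\qquad \rho^\ast(Z)=\sup_{X\in L^p}\bigl(\E[XZ]-\rho(X)\bigr),\qquad X\in L^p.
\]
Positive homogeneity of $\rho$ makes $\rho^\ast$ take only the values $0$ and $+\infty$, so $\rho^\ast=\delta_{\C}$ is the indicator of the $\sigma(L^q,L^p)$-closed convex set $\C=\{Z\in L^q:\E[XZ]\le\rho(X)\text{ for all }X\in L^p\}$, which is non-empty since $\rho$ is proper; thus $\rho(X)=\sup_{Z\in\C}\E[XZ]$. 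Testing the defining inequality of $\C$ against $X=-n\1_A$, $A\in\A$, $n\in\N$, and using monotonicity with $\rho(0)=0$ yields $Z\ge0$ $\P$-a.s.\ for every $Z\in\C$; testing it against $X+m$, $m\in\R$, and using translation invariance yields $\E[Z]=1$. Hence $\C\subseteq\{Z\in L^q_+:\E[Z]=1\}$, and by the Radon--Nikodym theorem $\C$ is the set of $\P$-densities of a subset $\Qc\subseteq\M_1^q(\Omega,\A,\P)$, giving $\rho(X)=\sup_{\Q\in\Qc}\E^\Q[X]$.

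It remains to address the compactness and attainment claim. The set $\C$ is $\sigma(L^q,L^p)$-closed by construction and, using finiteness of $\rho$, also $\|\cdot\|_q$-bounded, since $\|Z\|_q=\sup_{\|X\|_p\le1}\E[XZ]\le\sup_{\|X\|_p\le1}\rho(X)<\infty$ for $Z\in\C$. For $q\in(1,\infty)$ the space $L^q$ is reflexive, so bounded $\sigma(L^q,L^p)$-closed convex sets are $\sigma(L^q,L^p)$-compact; for $q=\infty$ (i.e.\ $p=1$), bounded $\sigma(L^\infty,L^1)$-closed sets are compact by the Banach--Alaoglu theorem. In either case $\C$, and therefore $\Qc$, may be chosen $\sigma(L^q,L^p)$-compact, while $\Q\mapsto\E^\Q[X]=\E\!\left[X\,\tfrac{\dif\Q}{\dif\P}\right]$ is $\sigma(L^q,L^p)$-continuous by the very definition of that topology; hence for each fixed $X\in L^p$ the supremum defining $\rho(X)$ is attained.
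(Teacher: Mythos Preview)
The paper does not supply its own proof of this proposition; it simply cites Theorem~7.20 in R\"uschendorf (2013). Your argument is precisely the standard convex-duality route (Fenchel--Moreau on the dual pair $(L^p,L^q)$, then read off the structure of the conjugate from the coherence axioms), which is how such results are established in the literature, so there is no meaningful comparison of approaches to make.

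There is, however, one genuine gap in your compactness argument. You claim that $\C$ is $\|\cdot\|_q$-bounded ``using finiteness of $\rho$'', but the stated hypothesis is only that $\rho$ is \emph{proper}, i.e.\ $\rho>-\infty$ everywhere and $\rho<+\infty$ at some point. Properness alone does not force finiteness: on $L^1([0,1])$ the functional $\rho(X)=\sup_{n\in\N} n\int_0^{1/n} X(t)\,\dif t$ is proper, coherent, and has the Fatou property (by your own ``if'' argument, since each density $n\1_{[0,1/n]}$ lies in $L^\infty$), yet $\rho(t\mapsto t^{-1/2})=+\infty$. For this $\rho$ the dual set $\C$ is unbounded in $L^\infty$ and hence not $\sigma(L^\infty,L^1)$-compact; indeed, were the supremum attained over a compact $\Qc$, $\rho$ would necessarily be finite. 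So either the proposition is tacitly assuming finiteness (as many textbook formulations of this representation do), or the attainment clause must be read as applying only where $\rho(X)<\infty$; in either case, the inequality $\sup_{\|X\|_p\le 1}\rho(X)<\infty$ that you invoke requires more than properness to justify.
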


With the dual representation, \cite{BauerleGlauner2020b} derived a complementary inequality to subadditivity.

\begin{lemma}\label{thm:subadditivity_complement}
	A proper coherent risk measure with the Fatou property $\rho: L^p \to \bar \R$ satisfies
	\[ \rho(X+Y) \geq \rho(X) - \rho(-Y) \qquad \text{for all } X,Y \in L^p. \]
\end{lemma}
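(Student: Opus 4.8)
The plan is to derive the inequality $\rho(X+Y) \geq \rho(X) - \rho(-Y)$ directly from the dual representation in Proposition \ref{thm:coherent_risk_measure_dual}. Since $\rho$ is a proper coherent risk measure with the Fatou property, there is a set $\Qc \subseteq \M_1^q(\Omega,\A,\P)$ with $\rho(Z) = \sup_{\Q \in \Qc} \E^\Q[Z]$ for every $Z \in L^p$, and the supremum is attained. The idea is to pick a measure that is (nearly) optimal for $X+Y$ and use the linearity of $\Q \mapsto \E^\Q[\cdot]$ to split the expectation.

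Concretely, for $X+Y \in L^p$ let $\Q^\ast \in \Qc$ attain the supremum, so that $\rho(X+Y) = \E^{\Q^\ast}[X+Y] = \E^{\Q^\ast}[X] + \E^{\Q^\ast}[Y]$. For the first term, $\E^{\Q^\ast}[X] \leq \sup_{\Q \in \Qc} \E^\Q[X] = \rho(X)$ would go the wrong way, so instead I would rearrange: $\rho(X+Y) = \E^{\Q^\ast}[X] + \E^{\Q^\ast}[Y] \geq \E^{\Q^\ast}[X] - \sup_{\Q \in \Qc} \E^\Q[-Y] = \E^{\Q^\ast}[X] - \rho(-Y)$. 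It then remains to observe $\E^{\Q^\ast}[X] \geq \rho(X)$, which is false in general; hence one should instead start from an optimal measure for $X$. Reworking: let $\Q^\ast$ attain $\rho(X)$. Then $\rho(X+Y) \geq \E^{\Q^\ast}[X+Y] = \E^{\Q^\ast}[X] + \E^{\Q^\ast}[Y] = \rho(X) + \E^{\Q^\ast}[Y] \geq \rho(X) - \E^{\Q^\ast}[-Y] \geq \rho(X) - \sup_{\Q \in \Qc}\E^\Q[-Y] = \rho(X) - \rho(-Y)$, which is exactly the claim.

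If one prefers not to invoke attainment, the same argument runs with an $\varepsilon$-optimal measure for $\rho(X)$ and a limit as $\varepsilon \to 0$; the Fatou property is only needed to guarantee the dual representation itself, which we are assuming. One should also note that $\rho(-Y)$ may be $+\infty$ or $-\infty$ a priori, but properness of $\rho$ rules out the value $-\infty$, so $-\rho(-Y)$ is well-defined in $\bar\R$ and the inequality is meaningful (it is trivially true when $\rho(-Y) = +\infty$).

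The only mild subtlety — and the step I would be most careful about — is the direction of the inequalities: the supremum in the dual representation gives upper bounds for $\rho$, so the correct move is to evaluate at a single measure to get a lower bound for $\rho(X+Y)$, and then the split $\E^{\Q^\ast}[Y] = -\E^{\Q^\ast}[-Y] \geq -\rho(-Y)$ must use that the chosen $\Q^\ast$ lies in the same set $\Qc$ that defines $\rho(-Y)$. There is no real obstacle beyond keeping the signs straight; the result is essentially immediate once one starts the chain from a measure optimal (or near-optimal) for $\rho(X)$.
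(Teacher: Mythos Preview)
Your argument via the dual representation is correct and aligns with what the paper indicates: the lemma is not proved in-text but attributed to \cite{BauerleGlauner2020b}, with the paper explicitly pointing to the dual representation of Proposition~\ref{thm:coherent_risk_measure_dual} as the tool, which is exactly the device you use (picking $\Q^\ast$ optimal for $\rho(X)$ and splitting). As a side remark, the same inequality drops out in one line from subadditivity alone, via $\rho(X)=\rho\bigl((X+Y)+(-Y)\bigr)\le\rho(X+Y)+\rho(-Y)$, so the dual representation is not strictly needed here.
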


In the following, $F_X(x)=\P(X\leq x)$ denotes the distribution function, $S_X(x)=1-F_X(x), \ x \in \R,$ the survival function and $\q_X(u)=\inf\{x \in \R: F_X(x)\geq u\}, \ u \in [0,1]$, the quantile function of a random variable $X$. Many established risk measures belong to the large class of distortion risk measures.

\begin{definition}\label{def:dist-rm}
	\begin{enumerate}
		\item An increasing function $g:[0,1] \to [0,1]$ with $g(0)=0$ and $g(1)=1$ is called \emph{distortion function}.
		\item The \emph{distortion risk measure} w.r.t.\ a distortion function $g$ is defined by $\rho_g: L^p \to \bar \R$,
		\[\rho_g(X)= \int_0^\infty g(S_X(x))  \dif x  -  \int_{-\infty}^0 1-g(S_X(x)) \dif x \]
		whenever at least one of the integrals is finite.
		\item The \emph{Wang premium principle} w.r.t.\ a distortion function $g$ is defined by $\pi_g: L^p_+ \to \bar \R$,
		\[\pi_g(X)= (1+\theta) \int_0^\infty g(S_X(x))  \dif x, \quad \theta \geq 0.  \]
	\end{enumerate}
\end{definition}

Distortion risk measures have many of the properties introduced in Definition \ref{def:rm_properties}, see e.g.\ \citet{Sereda2010}.
\begin{lemma}\phantomsection \label{thm:dist-rm_properties}
	\begin{enumerate}
		\item Distortion risk measures are law invariant, monotone, translation invariant, normalized and positive homogeneous. 
		\item A distortion risk measure is subadditive if and only if the distortion function $g$ is concave.
	\end{enumerate}
\end{lemma}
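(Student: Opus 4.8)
The plan is to prove the two parts by different means: part~(a) by direct computation from Definition~\ref{def:dist-rm}, and part~(b) by a spectral/dual representation for the implication ``$g$ concave $\Rightarrow$ subadditive'' and by an explicit counterexample for its converse. For part~(a): law-invariance is immediate since $\rho_g(X)$ depends on $X$ only through $S_X$. For normalization, inserting $X=0$ gives $S_0=\1_{(-\infty,0)}$, so on $(0,\infty)$ the integrand equals $g(0)=0$ and on $(-\infty,0)$ it equals $1-g(1)=0$, whence $\rho_g(0)=0$. For monotonicity, $X\le Y$ forces $S_X\le S_Y$ pointwise, and since $g$ is increasing we get $g\circ S_X\le g\circ S_Y$ and $1-g\circ S_X\ge 1-g\circ S_Y$; hence the first integral in the definition of $\rho_g$ can only increase and the subtracted second integral can only decrease, so $\rho_g(X)\le\rho_g(Y)$. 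Translation invariance and positive homogeneity both follow from a change of variables: from $S_{X+m}(x)=S_X(x-m)$ one substitutes $y=x-m$, splits the integral at $0$, and the two cross terms collapse through $g(t)+(1-g(t))=1$ into the extra summand $m$; from $S_{\lambda X}(x)=S_X(x/\lambda)$ for $\lambda>0$ one substitutes $y=x/\lambda$ and pulls out the Jacobian $\lambda$, the case $\lambda=0$ being covered by normalization. This part is routine, the only care needed being that the integrals stay well-defined under the splittings.

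For the direction ``$g$ concave $\Rightarrow$ $\rho_g$ subadditive'' I would first rewrite $\rho_g$ in spectral form. Writing the concave $g$ as $g(u)=\int_0^u\phi$ with $\phi$ decreasing, nonnegative and $\int_0^1\phi=1$, Fubini's theorem applied to $\int_0^\infty\int_0^{S_X(x)}\phi\,\dif t\,\dif x$ and $\int_{-\infty}^0\int_{S_X(x)}^1\phi\,\dif t\,\dif x$ converts $\rho_g(X)$ into $\int_0^1\q_X(u)\,\psi(u)\,\dif u$ with $\psi(u):=\phi(1-u)$ increasing. The Hardy--Littlewood inequality $\E[XZ]\le\int_0^1\q_X(u)\q_Z(u)\,\dif u$, which becomes an equality along a comonotone coupling, then identifies this with $\sup\{\E[XZ]:Z\sim\psi(U),\ U\text{ uniform on }[0,1]\}$, a pointwise supremum of linear functionals and therefore subadditive (indeed coherent). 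An equivalent route is to note that $\rho_g(X)=\int X\,\dif(g\circ\P)$ is the Choquet integral of $X$ with respect to the capacity $g\circ\P$, to check that concavity of $g$ makes $g\circ\P$ submodular via a Schur-concavity argument using $\P(A\cap B)+\P(A\cup B)=\P(A)+\P(B)$ with $\P(A\cap B)\le\P(A),\P(B)\le\P(A\cup B)$, and to invoke the classical fact that the Choquet integral against a submodular capacity is subadditive.

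For the converse I would argue by contraposition with an explicit construction on the (nonatomic) probability space. A first computation gives $\rho_g(\1_A)=g(\P(A))$ for every event $A$. Fix $0\le p<q\le 1$ and $h>0$ with $q+h\le 1$ and, using nonatomicity, choose events $A,B$ with $\P(A\cap B)=p$, $\P(A)=q$, $\P(B)=p+h$, so that $\P(A\cup B)=q+h$. The random variable $\1_A+\1_B$ takes the value $2$ on $A\cap B$ and $1$ on $A\triangle B$, so evaluating $\rho_g$ on its survival function yields $\rho_g(\1_A+\1_B)=g(q+h)+g(p)$. Subadditivity then forces $g(q+h)+g(p)\le g(q)+g(p+h)$, i.e.\ $g(p+h)-g(p)\ge g(q+h)-g(q)$ for all admissible $p<q$ and $h>0$; this ``decreasing increments'' property, combined with monotonicity of $g$, is equivalent to concavity, since it yields midpoint concavity and monotone (hence measurable) midpoint-concave functions are concave.

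I expect the forward implication in part~(b) to be the main obstacle. Everything else is either routine bookkeeping (part~(a)) or a soft counterexample needing only nonatomicity (the converse in~(b)), whereas the claim ``$g$ concave $\Rightarrow$ subadditive'' genuinely imports nontrivial external machinery --- the Hardy--Littlewood rearrangement inequality with attainment of its bound, or Schmeidler's theorem on Choquet integrals of submodular capacities --- and one has to watch integrability throughout: the spectrum $\psi(U)$, equivalently the density $\dif\Q/\dif\P$ of the dual measures, need not lie in $L^q$ unless $\psi$ is truncated, and the integrals defining $\rho_g$ may be infinite on parts of $L^p$.
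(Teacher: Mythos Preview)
Your proposal is correct, and in fact it supplies what the paper does not: the paper states this lemma without proof, simply citing \citet{Sereda2010} for the details. Part~(a) is handled exactly as you describe, by direct verification from the survival-function definition; the change-of-variable arguments for translation invariance and positive homogeneity are the standard ones.

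For part~(b), both of your routes for the forward implication are standard and correct. One small caveat: the spectral route as you write it (``$g(u)=\int_0^u\phi$'') presupposes that the concave $g$ is absolutely continuous on $[0,1]$, which can fail at the origin --- e.g.\ $g=\1_{(0,1]}$ is concave with $g(0)=0$, $g(1)=1$, and gives $\rho_g=\esssup$ --- so the Choquet/submodularity argument is the cleaner of the two in full generality. Your converse via indicator pairs on a nonatomic space is the usual construction; the ``decreasing increments'' inequality $g(p+h)-g(p)\ge g(q+h)-g(q)$ for $p<q$ indeed yields midpoint concavity (take $h=(q-p)$ and relabel), and monotone midpoint-concave functions are concave by Sierpi\'nski's theorem. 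The integrability caveat you flag at the end is real but orthogonal to the lemma as stated, since Definition~\ref{def:dist-rm} explicitly allows $\rho_g$ to take values in~$\bar\R$.
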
 

Wang premium principles share these properties apart from translation invariance. Moreover, they have the Fatou property which has so far not been investigated in the literature.

\begin{lemma}
	For a left-continuous distortion function $g$, the Wang premium principle has the Fatou property.
\end{lemma}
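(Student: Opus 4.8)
The plan is to peel the Wang functional apart into its two integrals and apply Fatou's lemma to each one: first to the probability integral that defines the survival function, obtaining a one‑sided convergence $\liminf_n S_{X_n}(x)\ge S_X(x)$ at every level $x$, and then to the Lebesgue integral over $(0,\infty)$ in the definition of $\pi_g$. The left‑continuity of $g$ is what lets us carry the one‑sided information through the distortion. Since multiplication by the constant $(1+\theta)\ge 0$ is harmless, it suffices to prove
\[
\liminf_{n\to\infty}\int_0^\infty g\bigl(S_{X_n}(x)\bigr)\,\dif x \ \ge\ \int_0^\infty g\bigl(S_X(x)\bigr)\,\dif x
\]
for any $\{X_n\}_{n\in\N}\subseteq L^p_+$ with $X_n\le Y$ $\P$‑a.s.\ for some $Y\in L^p_+$ and $X_n\to X$ $\P$‑a.s.\ (nonnegativity of the limit $X$ is automatic).

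First I would fix $x\ge 0$ and note that on the a.s.\ event where $X_n\to X$ one has $\liminf_{n}\1\{X_n>x\}\ge \1\{X>x\}$ pointwise, because $X(\omega)>x$ forces $X_n(\omega)>x$ for all large $n$; Fatou's lemma for $\P$ then yields $\liminf_n S_{X_n}(x)=\liminf_n\E[\1\{X_n>x\}]\ge\E[\1\{X>x\}]=S_X(x)$ (the domination by $Y$ is not even needed at this step). Next I would record the elementary fact that for an increasing, left‑continuous $g:[0,1]\to[0,1]$ and numbers $a_n,a\in[0,1]$ with $\liminf_n a_n\ge a$ one has $\liminf_n g(a_n)\ge g(a)$: for any $\delta>0$ we have $a_n\ge a-\delta$ eventually, hence $g(a_n)\ge g((a-\delta)^+)$ eventually, so $\liminf_n g(a_n)\ge g((a-\delta)^+)$, and letting $\delta\downarrow 0$ the left‑continuity gives $g((a-\delta)^+)\uparrow g(a)$ (the case $a=0$ being trivial since $g\ge 0$). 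Applying this with $a_n=S_{X_n}(x)$ and $a=S_X(x)$ gives $\liminf_n g(S_{X_n}(x))\ge g(S_X(x))$ for every $x\ge 0$. Finally, the maps $x\mapsto g(S_{X_n}(x))$ are nonnegative and Borel (monotone in $x$), so Fatou's lemma for Lebesgue measure on $(0,\infty)$ gives $\liminf_n\int_0^\infty g(S_{X_n}(x))\,\dif x\ge\int_0^\infty\liminf_n g(S_{X_n}(x))\,\dif x\ge\int_0^\infty g(S_X(x))\,\dif x$, which is what we wanted; multiplying by $(1+\theta)$ concludes.

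The only genuinely delicate point is the passage through $g$: we have merely $\liminf_n S_{X_n}(x)\ge S_X(x)$, not convergence, so we must know that $g$ respects this inequality in the liminf, which is precisely lower semicontinuity of $g$ — and an increasing function is lower semicontinuous exactly when it is left‑continuous. This is why left‑continuity (rather than, say, right‑continuity) is the natural hypothesis here and is the hinge of the argument; the remainder is routine bookkeeping with two applications of Fatou's lemma.
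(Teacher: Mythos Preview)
Your proof is correct and follows essentially the same route as the paper's: establish $\liminf_n g(S_{X_n}(x)) \ge g(S_X(x))$ pointwise in $x$ using that an increasing left-continuous $g$ is lower semicontinuous, then apply Fatou's lemma to the outer Lebesgue integral. The only difference is cosmetic---the paper obtains actual convergence $S_{X_n}(x)\to S_X(x)$ for almost every $x$ via convergence in distribution, whereas you get the one-sided inequality $\liminf_n S_{X_n}(x)\ge S_X(x)$ for every $x$ by applying Fatou's lemma to the indicators $\1\{X_n>x\}$; both feed into the same lower-semicontinuity step.
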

\begin{proof}
	Let $\{X_n\}_{n \in \N} \subseteq L^p_+$ with $X_n \leq Y$ $\P$-a.s.\ for some $Y \in L^p_+$ and $X_n \to X$ $\P$-a.s.\ for $X \in L^p_+$. Especially, $X_n \to X$ in distribution. Therefore, $S_{X_n}(x) \to S_X(x)$ for almost every $x \in R_+$. Since $g$ is left-continuous and increasing it is lower semicontinuous, i.e. $\liminf_{n \to \infty} g(S_{X_n}(x)) \geq g(S_X(x))$ for almost every $x \in R_+$. Finally, Fatou's lemma yields with
	\begin{align*}
	\liminf_{n \to \infty} \pi(X_n) = \liminf_{n \to \infty} (1+\theta) \int_0^{\infty} g(S_{X_n}(x)) \dif x
	\geq (1+\theta) \int_0^{\infty} g(S_X(x)) \dif x = \pi(X).
	\end{align*}
	the assertion.
\end{proof}

There is an alternative representation of distortion risk measures in terms of Lebesgue-Stieltjes integrals based on the quantile function in lieu of the survival function of the risk $X$. 

\begin{remark}\label{rem:spectral-rm}
	For a distortion risk measure $\rho_g$ with left-continuous distortion function $g$ it holds
	\begin{align}\label{eq:dist-rm_Stieltjes}
	\rho_g(X)= \int_0^1 \q_X(u)\dif \bar g(u),
	\end{align}
	where $\bar g(u)=1-g(1-u), \ u \in [0,1],$ is the dual distortion function, cf.\ \citet{Dhaene2012}. For a continuous and  concave distortion function $g:[0,1]\to [0,1]$, the dual distortion function $\bar g: [0,1] \to [0,1]$ is continuous and convex. It can thus be written as $\bar g(x)= \int_0^x \phi(s) \dif s$ for an increasing right-continuous function $\phi:[0,1] \to \R_+$, which is called \emph{spectrum}. By the properties of the Lebesgue-Stieltjes integral, \eqref{eq:dist-rm_Stieltjes} can then be written as 
	\begin{align}\label{eq:spectral_rm}
	\rho_{g}(X)=\rho_{\phi}(X)= \int_0^1 \q_X(u) \phi(u) \dif u.
	\end{align}
	Therefore, distortion risk measures with continuous concave distortion function are referred to as \emph{spectral risk measures}. Note that the continuity of $g$ is an additional requirement only in $0$, since an increasing concave function on $[0,1]$ is already continuous on $(0,1]$.
\end{remark}

Due to Hölder's inequality, spectral risk measures $\rho_\phi:L^p\to \bar \R$ with spectrum $\phi \in L^q$ fulfill
\begin{align*}
\left \lvert \rho_{\phi}(X) \right \rvert&=\left \lvert \int_0^1 \q_X(u) \phi(u) \dif u \right \rvert \leq \int_0^1 |\q_X(u)| \phi(u) \dif u 
= \big(\E|\q_X(U)|^p\big)^{\frac1p}   \big(\E|\phi(U)|^q\big)^{\frac1q} < \infty,
\end{align*}
where $U \sim \mathcal{U}([0,1])$ is arbitrary. Hence, they have the Fatou property by Lemma \ref{thm:finite_convex_fatou}. 

\begin{example}
	\begin{enumerate}
		\item The most widely used risk measure in finance and insurance \emph{Value-at-Risk}
		\[ \VaR_{\alpha}(X) = \q_X(\alpha), \qquad \alpha \in (0,1), \]
		is a distortion risk measure with distortion function $g(u)=\1_{(1-\alpha,1]}(u)$. Since the distortion function is not concave, Value-at-Risk is not coherent and especially not spectral. \citet{BauerleGlauner2020b} have shown that $\VaR$ has the Fatou property.
		\item The lack of coherence can be overcome by using \emph{Expected Shortfall}
		\[ \ES_{\alpha}(X)= \frac{1}{1-\alpha}\int_{\alpha}^1 \q_X(u) \dif u, \qquad \alpha \in [0,1). \]
		The corresponding distortion function $g(u)=	\min\{\frac{u}{1-\alpha},1\}$ is concave and $\ES$ thus coherent. It is also spectral with $\phi(u)=\frac{1}{1-\alpha}\1_{[\alpha,1]}(u)$. Due to the bounded spectrum, $\ES$ has the Fatou property. 
		\item The \emph{Proportional Hazard (PH) premium principle} 
		\[ \pi(X)= (1+\theta) \int_0^\infty S_X(x)^{\gamma}  \dif x, \qquad \theta \geq 0,\ \gamma \in (0,1],  \]
		is an example from the class of Wang premium principles. Note that the distortion function $g(x)=x^{\gamma}, \ \gamma \in (0,1]$ is continuous and concave. For $\gamma=1$, the widely used \emph{Expected premium principle} 
		\[ \pi(X) = (1+\theta)\E[X], \qquad \theta \geq 0, \]
		is a special case.
		\item The \emph{entropic risk measure}
		\[ \rho_\gamma(X)= \frac{1}{\gamma} \log \E\left[ e^{\gamma X}  \right], \qquad \gamma >0, \]
		is also known as \emph{exponential premium principle}. It is a law-invariant and convex monetary risk measure which does not belong to the distortion class. For random variables with existing moment-generating function it has the Fatou property directly by dominated convergence.
	\end{enumerate}
\end{example}

\section{Dynamic reinsurance model}\label{sec:model}

The aim of this paper is to introduce and solve a dynamic extension of the static optimal reinsurance problem
\begin{align}\label{eq:classical_reinsurance_problem}
\min_{f \in \F} \quad r_{\text{CoC}} \cdot \rho \big(f(Y) + \pi_{R}(f)\big),
\end{align}
which has been studied extensively in the literature starting with \citet{CaiTan2007} and generalizations i.a.\ by \citet{ChiTan2013}, \citet{Cui2013}, \citet{Lo2017} and \citet{BauerleGlauner2018}. In this setting, an insurance company incurs a loss $Y \in L^p_+$  at the end of a fixed period due to insurance claims. In order to reduce its risk, the insurer may cede a portion of it to a reinsurance company and retain only $f(Y)$. Here, the reinsurance treaty $f$ determines the retained loss $f(Y(\omega))$ in each scenario $\omega \in \Omega$. For the risk transfer, the insurer has to compensate the reinsurer with a reinsurance premium $\pi_R(f)= \pi_R(Y-f(Y))$ determined by a premium principle $\pi_R:L^p_{+} \to \bar \R$. In order to preclude moral hazard, it is standard in the actuarial literature to assume that both $f$ and the ceded loss function $\id_{\R_+} -f$ are increasing meaning that both the insurer and the reinsurer suffer from higher claims. Otherwise, the insurer might have an incentive to misreport losses or accept unjustified claims. Hence, the set of admissible retained loss functions is
\[ \F= \{f: \R_+ \to \R_+ \mid f(t) \leq t \ \forall t \in \R_+, \ f \text{ increasing}, \ \id_{\R_+}-f \text{ increasing} \}. \]
The insurer's target is to minimize its cost of solvency capital which is calculated as the cost of capital rate $r_{\text{CoC}} \in (0,1]$ times the solvency capital requirement determined by applying the risk measure $\rho$ to the insurer's effective risk after reinsurance.

It is natural to model a dynamic extension of \eqref{eq:classical_reinsurance_problem} in discrete time since reinsurance treaties are typically written for one year \citep{Albrecher2017} and we will focus on the management of the insurer's surplus by means of reinsurance neglecting the possible use of capital market instruments.

In our model, the insurer is endowed with an initial capital $x \in \R$. At the end of each period $[n,n+1), \ n \in \N_0$, he incurs aggregate claims $Y_{n+1} \in L^p_+$ for that period and receives the total premium income $Z_{n+1} \in L^\infty_+$ for the next period. Both quantities are allowed to be stochastic and $(Y_n,Z_n)_{n \in \N}$ is assumed to be an independent sequence of random vectors defined on a common probability space $(\Omega, \A, \P)$.  Requiring that the aggregate losses are independent and fulfill some integrability condition is standard in actuarial science. Often, the premium income is assumed to be deterministic. Here, we allow for some uncertainty or fluctuation but one will at least know an upper bound (complete and timely payment by all insurants).  The insurer's uncontrolled surplus process is given recursively by
\[ X_0=x, \qquad X_{n+1} = X_n - Y_{n+1} + Z_{n+1}. \]
Note that a negative surplus is possible. In order to reduce the downside risk of its surplus process, the insurance company can underwrite a reinsurance treaty represented by a retained loss function $f_n \in \F$ at the beginning of each period $[n,n+1)$. When purchasing reinsurance $f_n$ at time $n$, the insurance company retains the portion $f_n(Y_{n+1})$ of the claims $Y_{n+1}$ arriving at time $n+1$ and the reinsurer covers $Y_{n+1}-f_n(Y_{n+1})$. In return, the insurer has to pay the reinsurance premium  $\pi_{R,n}(f_n)=\pi_{R,n}\big(Y_{n+1}-f_n(Y_{n+1})\big)$. Throughout, we require the premium principles to have the following standard properties.

\begin{assumption}\label{ass:premium}
	$\pi_{R,n}:L^p_+ \to \bar \R$  is a law-invariant, monotone and normalized premium principle with the Fatou property satisfying $\pi_{R,n}(Y_{n+1}) < \infty$ for all $n \in \N_0$.
\end{assumption}
The condition $\pi_{R,n}(Y_{n+1}) <\infty$ means that the risk can be fully ceded at each stage which is natural for a model with a passive reinsurer.

Additionally, we may impose a budget constraint such that for a current capital $x \in \R$ the available reinsurance contracts at time $n$ are 
\[ D_n(x) = \{f \in \F: \pi_{R,n}(f) \leq x^+ \}. \]
This precludes the insurer from purchasing reinsurance on credit. 

For $n \in \N_0$ we denote by $\mathcal{H}_n$ the set of \emph{feasible histories} of the surplus process up to time $n$
\begin{align*}
	h_n = \begin{cases}
		x_0, & \text{if } n=0,\\
		(x_0,f_0,x_1, \dots,f_{n-1},x_n), & \text{if } n \geq 1,
	\end{cases}
\end{align*}
where $f_k \in D_k(x_k)$ for $k \in \N_0$. The decision making at the beginning of each period has to be based on the information available at that time. I.e. the decisions must be functions of the history of the surplus process. 
\begin{definition}
	\begin{enumerate}
		\item A measurable mapping $d_n: \mathcal{H}_n \to A$ with $d_n(h_n) \in D_n(x_n)$ for every $h_n \in \mathcal{H}_n$ is called  \emph{decision rule} at time $n$. A finite sequence $\pi=(d_0, \dots,d_{N-1})$ is called \emph{$N$-stage policy} and a sequence $\pi=(d_0, d_1, \dots)$ is called \emph{policy}.
		\item A decision rule at time $n$ is called \emph{Markov} if it  depends on the current state only, i.e.\ $d_n(h_n)=d_n(x_n)$ for all $h_n \in \mathcal{H}_n$. If all decision rules are Markov, the ($N$-stage) policy is called \emph{Markov}.
		\item An ($N$-stage) policy $\pi$ is called \emph{stationary} if $\pi=(d, \dots,d)$ or $\pi=(d,d,\dots)$, respectively, for some Markov decision rule $d$.
	\end{enumerate}
\end{definition}
With $\Pi \supseteq \Pi^M \supseteq \Pi^S$ we denote the sets of all policies, Markov policies and stationary policies. It will be clear from the context if $N$-stage or infinite stage policies are meant. An admissible policy always exists since full retention is feasible in any scenario. The dynamic of the controlled surplus process under a policy $\pi \in \Pi$ is given by
\begin{align}\label{eq:controlled_surplus}
X_0^\pi=x, \qquad X_{n+1}^\pi = X_n^\pi - d_n(h_n)(Y_{n+1}) - \pi_{R,n}(d_n(h_n)) + Z_{n+1}.
\end{align} 

This setting defines a non-stationary Markov Decision Process (MDP) with the following data:
\begin{itemize}
	\item The state space is the real line $\R$ with Borel $\sigma$-algebra $\B(\R)$.
	\item The action space is $\F$ with Borel $\sigma$-algebra $\B(\F)$. The topology is given in Lemma \ref{thm:model_properties} below.
	\item The independent disturbances are $(Y_n,Z_n)_{n \in \N}$.
	\item The transition function $T_n: \R \times \F \times \R_+ \times \R_+ \to \R$ at time $n$ is given by
	\[  T_n(x,f,y,z) = x - f(y) - \pi_{R,n}(f) + z. \]
	\item Regarding the admissible actions $D_n(x)$ at time $n$ in state $x \in \R$ we have two cases:
	\begin{itemize}
		\item[] Unconstrained: $D_n(x) = \F$ for all $x \in \R$.
		\item[] Budget-constrained: $D_n(x) = \{f \in \F: \pi_{R,n}(f) \leq x^+ \}$ for all $x \in \R$. 
	\end{itemize}
	The set of admissible state-action combinations is $D_n=\{(x,f) \in \R\times\F: f \in D_n(x) \}$. It contains the graph of the constant measurable map $\R \ni x \mapsto \id_{\R_+}$.
	\item The one-stage cost function $c:D \times \R \to \R$ is given by $c(x,f,x')= -x'$, where $x'$ denotes the next state of the surplus process, see Section \ref{sec:finite}.
\end{itemize}

The next lemma summarizes some properties of the dynamic reinsurance model which will be relevant in the following sections.
\begin{lemma}\phantomsection\label{thm:model_properties}
	\begin{enumerate}
		\item The retained loss functions $f \in \F$ are Lipschitz continuous with constant $L\leq1$. Moreover, $\F$ is a Borel space as a compact subset of the metric space $(C(\R_+),m)$ of continuous real-valued functions on $\R_+$ with the metric of compact convergence
		\[ m(f_1,f_2) = \sum_{j=1}^{\infty} 2^{-j} \frac{\max_{0 \leq t \leq j} |f_1(t)-f_2(t)|}{1+ \max_{0 \leq t \leq j} |f_1(t)-f_2(t)| }.\]
		\item The functional $\pi_{R,n}:\F \to \R_+, \ f \mapsto \pi_{R,n}(f)$ is lower semicontinuous.
		\item $D_n(x)$ is a compact subset of $\F$ for all $x \in \R$ and the set-valued mapping $\R \ni x \to D_n(x)$ is upper semicontinuous, i.e.\ if $x_k\to x$ and $a_k\in D_n(x_k)$, $k\in\N$, then $\{a_k\}_{k \in \N}$ has an accumulation point in $D_n(x)$.
		\item The transition function $T_n$ is upper semicontinuous and the one-stage cost $D_n \ni (x,f)\mapsto c(x,f,T_n(x,f,y,z)) = - T_n(x,f,y,z)$ is lower semicontinuous.
	\end{enumerate}
\end{lemma}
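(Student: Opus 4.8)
The plan is to prove the four assertions in order, since each uses the previous ones. For (a), the Lipschitz bound is immediate from the two monotonicity constraints defining $\F$: for $0 \le s \le t$, monotonicity of $f$ gives $f(t)-f(s)\ge 0$ and monotonicity of $\id_{\R_+}-f$ gives $f(t)-f(s)\le t-s$, so $f$ is $1$-Lipschitz. For the topological claim I would apply the Arzel\`a--Ascoli theorem in $C(\R_+)$ with the topology of compact convergence: $\F$ is pointwise bounded (because $0\le f(t)\le t$) and equicontinuous (being uniformly $1$-Lipschitz), hence relatively compact, and it is closed because being nonnegative, increasing, bounded above by the identity, and having increasing complement $\id_{\R_+}-f$ are all preserved under pointwise limits, while convergence in $m$ implies pointwise convergence. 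A compact metric space is Polish, hence a Borel space.

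For (b), the point is to turn convergence in $(C(\R_+),m)$ --- uniform convergence on compacta --- into the hypothesis of the Fatou property of $\pi_{R,n}$ (Assumption \ref{ass:premium}). If $f_k\to f$ in $\F$, then $f_k(t)\to f(t)$ for every $t\ge 0$, so $f_k(Y_{n+1})\to f(Y_{n+1})$ $\P$-a.s.\ (recall $Y_{n+1}$ is a.s.\ finite), while the domination $0\le Y_{n+1}-f_k(Y_{n+1})\le Y_{n+1}\in L^p_+$ holds. The Fatou property then gives $\liminf_k \pi_{R,n}(f_k)=\liminf_k \pi_{R,n}(Y_{n+1}-f_k(Y_{n+1}))\ge \pi_{R,n}(Y_{n+1}-f(Y_{n+1}))=\pi_{R,n}(f)$, i.e.\ lower semicontinuity; and $\pi_{R,n}(f)\in\R_+$ since, by normalization and monotonicity, $0\le \pi_{R,n}(Y_{n+1}-f(Y_{n+1}))\le \pi_{R,n}(Y_{n+1})<\infty$. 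I expect this to be the main substantive step, being the only place where a genuine analytic input (the Fatou property together with the a.s.\ domination) is used rather than soft topology.

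For (c), in the unconstrained case $D_n(x)=\F$ is compact by (a). In the budget-constrained case $D_n(x)=\F\cap\{f\in\F:\pi_{R,n}(f)\le x^+\}$ is the intersection of the compact set $\F$ with a sublevel set that is closed by the lower semicontinuity from (b), hence compact. For the stated upper semicontinuity, take $x_k\to x$ and $a_k\in D_n(x_k)$; compactness of $\F$ yields a subsequence $a_{k_j}\to a\in\F$, and in the constrained case lower semicontinuity together with continuity of $t\mapsto t^+$ gives $\pi_{R,n}(a)\le\liminf_j\pi_{R,n}(a_{k_j})\le\liminf_j x_{k_j}^+=x^+$, so $a\in D_n(x)$.

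For (d), I would first check that $(f,y)\mapsto f(y)$ is jointly continuous on $\F\times\R_+$: if $f_k\to f$ in $m$ and $y_k\to y$, then $|f_k(y_k)-f(y)|\le|f_k(y_k)-f(y_k)|+|f(y_k)-f(y)|$, where the first term vanishes by uniform convergence of $f_k$ on a compact set containing all $y_k$ and the second by continuity of $f$. Hence $T_n(x,f,y,z)=x-f(y)-\pi_{R,n}(f)+z$ is the sum of the jointly continuous map $(x,f,y,z)\mapsto x-f(y)+z$ and the upper semicontinuous map $f\mapsto-\pi_{R,n}(f)$ (which is (b) restated, finiteness ensuring no $-\infty$ arises), hence upper semicontinuous; consequently $c(x,f,T_n(x,f,y,z))=-T_n(x,f,y,z)$ is lower semicontinuous.
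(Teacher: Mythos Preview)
Your proof is correct and follows essentially the same approach as the paper's: the Lipschitz bound and Arzel\`a--Ascoli argument in (a), the Fatou-property argument in (b), the sublevel-set argument in (c), and the joint continuity of $(f,y)\mapsto f(y)$ in (d) all match. The only cosmetic differences are that in (c) you verify the upper semicontinuity of $x\mapsto D_n(x)$ directly via a subsequence argument whereas the paper shows $D_n$ is closed and invokes a cited lemma, and in (d) you split the triangle inequality as $|f_k(y_k)-f(y_k)|+|f(y_k)-f(y)|$ (uniform convergence on compacta plus continuity of $f$) while the paper uses $|f_k(y_k)-f_k(y)|+|f_k(y)-f(y)|$ (the common Lipschitz constant plus pointwise convergence).
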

\begin{proof}
	\begin{enumerate}
		\item Let $f \in \F$. Since $\id_{\R_+}-f$ is increasing, it holds for $0\leq x\leq y$ that $x-f(x) \leq y-f(y)$. Rearranging and using that $f$ is increasing, too, yields with $|f(x)-f(y)|=f(y)-f(x) \leq y-x = |x-y|$ the Lipschitz continuity with common constant $L=1$. Moreover, $\F$ is pointwise bounded by $\id_{\R_+}$ and closed under pointwise convergence. Hence, $(\F,m)$ is a compact metric space by the Arzelà-Ascoli theorem and as such also complete and separable, i.e. a Borel space.
		\item Let $\{f_k\}_{k \in \N}$ be a sequence in $\F$ such that $f_k \to f \in \F$. Especially, it holds $f_k(x) \to f(x)$ for all $x \in \R_+$ and $Y-f_k(Y) \to Y-f(Y)$ $\P$-a.s. Since $Y-f_n(Y) \leq Y \in L^1$ for all $k \in \N$, the Fatou property of $\pi_{R,n}$ implies
		\[ \liminf_{k \to \infty} \pi_{R,n}(f_k) = \liminf_{k \to \infty} \pi_{R,n}\big(Y-f_k(Y)\big) \geq \pi_{R,n}\big(Y-f(Y)\big)=\pi_{R,n}(f). \]
		\item Due to a), we only have to consider the budget-constrained case. Since $\F$ is compact it suffices to show that $D_n(x) = \{f \in \F: \pi_{R,n}(f) \leq (x)^+ \}$ is closed. This is the case since $D_n(x)$ is a sublevel set of the lower semicontinuous function $\pi_{R,n}:\F \to \R_+$. Furthermore, we show that $D_n$ is closed to obtain the upper semicontinuity from Lemma A.2.2 in \citet{BaeuerleRieder2011}. From the lower semicontinuity of $\pi_{R,n}$ it follows that the epigraph 
		\[ \epi(\pi_{R,n})= \{ (f,x) \in \F\times \R_+: \pi_{R,n}(f) \leq x \} \]
		is closed. Thus, $D_n=\{(x,f): (f,x) \in \epi(\pi_{R,n})\} \cup ( \R_- \times D_n(0))$ is closed, too.
		\item We show that the mapping $\F \times \R_+ \ni (f,y) \mapsto f(y)$ is continuous. Then, the transition function $T_n$ is upper semicontinuous as a sum of upper semicontinuous functions due to part b) and the one-stage cost $c(x,f,T_n(x,f,y,z)) = - T_n(x,f,y,z)$ is lower semicontinuous. Let $\{(f_k,y_k)\}_{k \in \N}$ be a convergent sequence in $\F \times \R_+$ with limit $(f,y)$. Since convergence w.r.t.\ the metric $m$ implies pointwise convergence and all $f_k$ have the Lipschitz constant $L=1$, it follows
		\begin{align*}
			\left| f_k(y_k)-f(y) \right| \leq \left| f_k(y_k)-f_k(y)\right| + \left|f_k(y)-f(y) \right|\leq \left| y_k-y\right| + \left|f_k(y)-f(y) \right|  \to 0.
		\end{align*}
		I.e.\ $\F \times \R_+ \ni (f,y) \mapsto f(y)$ is continuous. \qedhere
	\end{enumerate}
\end{proof}

\section{Cost of capital minimization with finite planning horizon}\label{sec:finite}

Let $N \in \N$ be the finite planning horizon and $\pi=(d_0,\dots,d_{N-1}) \in \Pi$ a policy of the insurance company. At the beginning of the terminal period $[N-1,N)$, the insurer faces the same situation as in the static reinsurance problem \eqref{eq:classical_reinsurance_problem}. The cost of solvency capital is calculated for the period's loss which equals the negative surplus at time $N$:
\[ V_{N-1\pi}(h_{N-1}) = r_{\text{CoC}} \cdot \rho_{N-1}\Big(d_{N-1}(h_{N-1})(Y_N) + \pi_{R,N-1}(d_{N-1}(h_{N-1})) - Z_{N} -x_{N-1} \Big). \]  
In any earlier period $[n,n+1)$, the effective risk relevant for the cost of capital calculation consists of the risk for that period plus the discounted future cost of capital. The latter is a random variable as a measurable function of the next state of the surplus process, see Remark \ref{rem:measurability}. I.e.\ the cost of capital is given by
\begin{align*}
	V_{n\pi}(h_n) = r_{\text{CoC}} \cdot \rho_n\Big(& d_n(h_n)(Y_{n+1}) + \pi_{R,n}(d_n(h_n)) - Z_{n+1} -x_n\\
	 &+ \beta V_{n+1\pi}\big(h_n, \ d_n(h_n), \ x_n+ Z_{n+1} - d_n(h_n)(Y_{n+1}) - \pi_{R,n}(d_n(h_n))  \big) \Big),
\end{align*}
where $\beta \in (0,1]$ is a discount factor. To simplify the notation, we assume that the cost of capital rate $r_{\text{CoC}}$ is included in the discount factor meaning that $\beta$ is of the form
\[ \beta= r_{\text{CoC}} \cdot \frac{1}{1+r}, \]
where $r \in (0,1]$ is the risk-free interest rate per period. Hence, the discount factor still is a quantity in $(0,1]$ and our simplification of the notation entails no restriction. In the first period, one has to multiply once more with the cost of capital rate in order to obtain the overall recursive cost of capital, but for the minimization this is of course not relevant. Hence, we can define the \emph{value of a policy} $\pi=(d_0,\dots,d_{N-1}) \in \Pi$, i.e.\ the cost of capital under this policy, recursively as
\begin{align*}
	V_N(h_N) &=   0,\\
	V_{n\pi}(h_n) &=  \rho_n \Big(  d_n(h_n)(Y_{n+1})+ \pi_{R,n}(d_n(h_n))- Z_{n+1}-x_n \\
	& \phantom{=  \rho_n \Big(}\ + \beta V_{n+1\pi}\big(h_n,\ d_n(h_n),\  x_n + Z_{n+1} - d_n(h_n)(Y_{n+1}) - \pi_{R,n}(d_n(h_n)) \big)\Big).
\end{align*}
The corresponding \emph{value functions} are
\[ V_n(h_n) = \inf_{\pi \in \Pi}  \ V_{n\pi}(h_n), \qquad h_n \in \mathcal{H}_n, \]
and the optimization objective is to determine the optimal recursive cost of solvency capital
\begin{align}\label{eq:opt_crit_finite}
	V_0(x)=\inf_{\pi \in \Pi} \ V_{0\pi}(x), \qquad x \in \R.
\end{align}
For the actuarial interpretation of the optimality criterion one should note the capital requirement for the possible claims is set off against the insurer's capital and income at each stage. Hence, a negative cost of capital at time zero means that a part of the initial capital can be used for other investments or dividend payments without compromising solvency up to the planning horizon. On the other hand, a positive cost of capital represents the opportunity cost of the additional capital needed to make to risk up to the planning horizon acceptable.

We make the following assumption in this section.
\begin{assumption}\phantomsection \label{ass:finite}
	\begin{enumerate}
		\item[(i)] $\rho_0,\dots,\rho_{N-1}: L^p \to \bar \R$ are law-invariant and normalized monetary risk measures with the Fatou property.
		\item[(ii)] It holds $\rho_n(\lambda Y_{n+1}) < \infty$ for all $\lambda \in \R_+$ and $n=0,\dots,N-1$.
	\end{enumerate}
\end{assumption}
I.e.\ we only need a few standard properties for the risk measures plus a finiteness condition to have well-defined policy values. Part (ii) of the assumption reduces to $\rho_n(Y_{n+1})<\infty$ if the risk measure is additionally positive homogeneous.

\begin{remark}\label{rem:measurability}
	For the recursive definition of the policy values to be meaningful, we need to make sure that the risk measures are applied to elements of $L^p(\Omega,\A,\P)$. This has two aspects: integrability will be ensured by Lemma \ref{thm:bounding}, but first of all $V_{n\pi}$ needs to be a measurable function for all $\pi \in \Pi$ and $n=0,\dots,N$. For most risk measures with practical relevance, this is fulfilled. To see this, we proceed by backward induction. For $n=N$ there is noting to show and if $V_{n+1\pi}$ is measurable, the function
	\begin{align*}
		\psi(h_n,y,z) =&\ d_n(h_n)(y)+ \pi_{R,n}(d_n(h_n))- z-x_n \\
		&\ + \beta V_{n+1\pi}\big(h_n,\ d_n(h_n),\  x_n + z - d_n(h_n)(y) - \pi_{R,n}(d_n(h_n)) \big)\Big)
	\end{align*}
	is measurable, too, as a composition of measurable maps. Now let us distinguish different risk measures.
	\begin{itemize}
		\item In the risk-neutral case, i.e.\ for $\rho=\E$, and also for the entropic risk measure $\rho_\gamma$ the measurability of $V_{n\pi}(h_n)= \rho(\psi(h_n,Y_{n+1},Z_{n+1}))$ follows from Fubini's theorem.
		\item For distortion risk measures, the measurability is guaranteed, too. Here, Fubini's theorem yields that the survival function of $\psi(h_n,Y_{n+1},Z_{n+1})$
		\[ S(t|h_n) = \int \1\big\{ \psi(h_n,Y_{n+1}(\omega),Z_{n+1}(\omega)) > t \big\} \P(\dif \omega) \] 
		is measurable. A distortion function $g$ is increasing and hence measurable. So again by Fubini's theorem we obtain the measurability of 
		\begin{align*}
			V_{n\pi}(h_n)= \rho_g(\psi(h_n,Y_{n+1},Z_{n+1})) = \int_0^\infty g(S(t|h_n))  \dif t  -  \int_{-\infty}^0 1-g(S(t|h_n)) \dif t
		\end{align*}
		since the integrands are non-negative and compositions of measurable maps. 
		\item For proper coherent risk measures with the Fatou property one can insert the dual representation of Proposition \ref{thm:coherent_risk_measure_dual}
		\[ V_{n\pi}(h_n)=\sup_{\Q \in \Qc} \E^\Q[\psi(h_n,Y_{n+1},Z_{n+1})]. \] 
		Then, an optimal measurable selection argument as in Theorem 3.6 in \citet{BauerleGlauner2020a} yields the measurability.
	\end{itemize}
	Throughout, it is implicitly assumed that the risk measures are chosen such that all policy values are measurable.
\end{remark}

\begin{lemma}\label{thm:bounding}
	There exist decreasing bounding functions $\ubar b_n:\R \to \R_-$ and $\bar b_n:\R \to \R_+$ such that it holds for all $\pi \in \Pi$ and $n=0,\dots,N-1$
	\[ \ubar b_n(x_n) \leq  V_{n\pi}(h_n) \leq \bar b_n(x_n), \qquad h_n \in \mathcal{H}_n. \]
	The bounding functions are given by
	\[ \ubar b_n(x)= -\ubar c_n -a_n x^+, \qquad \qquad \bar b_n(x)= \bar c_n + a_n x^- \]
	with recursively defined, non-negative coefficients
	\begin{align*}
		\ubar c_{N-1} &= \esssup(Z_N), & \ubar c_n &= (1+\beta a_{n+1}) \esssup(Z_{n+1}) + \beta \ubar c_{n+1}, \\
		\bar c_{N-1} &= \rho_{N-1}(Y_N) + \pi_{R,N-1}(Y_N), & \bar c_n &= \rho_n \big( (1+\beta a_{n+1})Y_{n+1} \big) + (1+\beta a_{n+1}) \pi_{R,n}(Y_{n+1})  + \beta \bar c_{n+1},\\
		a_{N-1}&=1, &  a_n &= 1+ \beta a_{n+1}.
	\end{align*}
\end{lemma}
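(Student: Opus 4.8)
The plan is to proceed by backward induction on $n$, establishing simultaneously the upper bound $V_{n\pi}(h_n) \leq \bar b_n(x_n)$ and the lower bound $\ubar b_n(x_n) \leq V_{n\pi}(h_n)$ for all policies $\pi \in \Pi$ and all feasible histories $h_n$. The key technical ingredients are the monotonicity and translation invariance of the $\rho_n$ (which together define monetary risk measures), the normalization $\rho_n(0)=0$, the fact that every $f \in \F$ satisfies $0 \leq f(y) \leq y$, the feasibility constraint $\pi_{R,n}(f) \geq 0$ (normalization plus monotonicity of the premium principle), and the essential-supremum bound $0 \leq Z_{n+1} \leq \esssup(Z_{n+1})$.

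For the \textbf{base case} $n = N-1$: since $V_N \equiv 0$, we have $V_{N-1\pi}(h_{N-1}) = \rho_{N-1}\big(f(Y_N) + \pi_{R,N-1}(f) - Z_N - x_{N-1}\big)$ where $f = d_{N-1}(h_{N-1})$. For the upper bound, use $f(Y_N) \leq Y_N$, $\pi_{R,N-1}(f) = \pi_{R,N-1}(Y_N - f(Y_N)) \leq \pi_{R,N-1}(Y_N)$ (monotonicity), and $-Z_N \leq 0$, together with monotonicity and translation invariance of $\rho_{N-1}$, to get $V_{N-1\pi}(h_{N-1}) \leq \rho_{N-1}(Y_N) + \pi_{R,N-1}(Y_N) - x_{N-1} \leq \bar c_{N-1} + a_{N-1} x_{N-1}^-$ since $-x_{N-1} \leq x_{N-1}^-$ and $a_{N-1}=1$. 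For the lower bound, use $f(Y_N) \geq 0$, $\pi_{R,N-1}(f) \geq 0$, $-Z_N \geq -\esssup(Z_N)$, monotonicity, translation invariance, and normalization to get $V_{N-1\pi}(h_{N-1}) \geq \rho_{N-1}(0) - \esssup(Z_N) - x_{N-1} = -\esssup(Z_N) - x_{N-1} \geq -\ubar c_{N-1} - a_{N-1} x_{N-1}^+$ since $-x_{N-1} \geq -x_{N-1}^+$.

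For the \textbf{induction step}, assume the bounds hold at stage $n+1$. Write $f = d_n(h_n)$ and $x' = x_n + Z_{n+1} - f(Y_{n+1}) - \pi_{R,n}(f)$ for the next state. For the upper bound, apply the induction hypothesis $V_{n+1\pi}(h_{n+1}) \leq \bar c_{n+1} + a_{n+1}(x')^-$ inside $\rho_n$ using monotonicity; then bound $(x')^- \leq x_n^- + f(Y_{n+1}) + \pi_{R,n}(f)$ (using $Z_{n+1} \geq 0$ and the elementary inequality $(a-b)^- \leq a^- + b$ for $b \geq 0$, here with the roles arranged so the positive contributions $f(Y_{n+1}) + \pi_{R,n}(f)$ add up). Collecting the coefficient of $f(Y_{n+1})$, which becomes $(1 + \beta a_{n+1})$, using $f(Y_{n+1}) \leq Y_{n+1}$ and $\pi_{R,n}(f) \leq \pi_{R,n}(Y_{n+1})$, then pulling out the constant terms via translation invariance and the deterministic bound on $Z_{n+1}$, yields $V_{n\pi}(h_n) \leq \rho_n\big((1+\beta a_{n+1}) Y_{n+1}\big) + (1+\beta a_{n+1})\pi_{R,n}(Y_{n+1}) + \beta \bar c_{n+1} + (1+\beta a_{n+1}) x_n^- = \bar c_n + a_n x_n^-$. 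The lower bound is symmetric: apply $V_{n+1\pi}(h_{n+1}) \geq -\ubar c_{n+1} - a_{n+1}(x')^+$, bound $(x')^+ \leq x_n^+ + \esssup(Z_{n+1})$ using $f(Y_{n+1}) \geq 0$, $\pi_{R,n}(f) \geq 0$ and $Z_{n+1} \leq \esssup(Z_{n+1})$, then drop the remaining non-negative terms $f(Y_{n+1}) + \pi_{R,n}(f)$ inside $\rho_n$ by monotonicity, use $-Z_{n+1} \geq -\esssup(Z_{n+1})$, and finish with normalization $\rho_n(0) = 0$ and translation invariance to collect $\ubar c_n = (1+\beta a_{n+1})\esssup(Z_{n+1}) + \beta \ubar c_{n+1}$ and $a_n x_n^+$.

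\textbf{Main obstacle.} The routine part is the algebra of tracking how the coefficient $a_n = 1 + \beta a_{n+1}$ propagates; the only genuinely delicate point is the correct handling of the positive and negative parts when passing the next-state expression $x' = x_n - f(Y_{n+1}) - \pi_{R,n}(f) + Z_{n+1}$ through $(\cdot)^+$ and $(\cdot)^-$, since $x'$ mixes a deterministic shift, a bounded non-negative term $Z_{n+1}$, and non-negative losses. One must be careful that in the upper-bound estimate it is $(x')^-$ that appears (so the loss terms $f(Y_{n+1}) + \pi_{R,n}(f)$, which decrease $x'$, \emph{increase} $(x')^-$ and hence must be kept and bounded by $Y_{n+1}$), whereas in the lower-bound estimate it is $(x')^+$ that appears (so those same loss terms can simply be discarded after using monotonicity of $\rho_n$). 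Once the inequalities $(a - b + c)^- \leq a^- + b$ for $b,c \geq 0$ and $(a - b + c)^+ \leq a^+ + c$ for $b,c \geq 0$ are applied in the right places, the estimates close exactly with the stated coefficients. Monotonicity of $\ubar b_n, \bar b_n$ in $x$ is immediate since $x \mapsto x^+$ is increasing and $x \mapsto x^-$ decreasing and the coefficients $a_n$ are non-negative.
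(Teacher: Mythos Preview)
Your proposal is correct and follows essentially the same backward-induction argument as the paper. The only cosmetic difference is that the paper phrases the induction step by invoking the monotonicity of $\ubar b_{n+1}$ and $\bar b_{n+1}$ to replace the next-state argument by an extremal value before expanding, whereas you expand directly via the elementary inequalities $(a-b+c)^- \le a^- + b$ and $(a-b+c)^+ \le a^+ + c$ for $b,c\ge 0$; these are exactly equivalent for bounding functions of the given affine-in-$x^\pm$ form, and the resulting estimates coincide line for line.
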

\begin{proof}
	We proceed by backward induction. At time $N-1$, it follows from the monotonicity, translation invariance and normalization of $\rho_{N-1}$ that for any policy $\pi \in \Pi$
	\begin{align*}
		V_{N-1\pi}(h_{N-1}) &= \rho_{N-1}\Big(d_{N-1}(h_{N-1})(Y_N) + \pi_{R,N-1}(d_{N-1}(h_{N-1})) - Z_{N} -x_{N-1} \Big)\\
		&\geq  \rho_{N-1}\big(-\esssup(Z_N)-x_{N-1} \big)\\
		&= -\esssup(Z_N)-x_{N-1} \\
		&\geq \ubar b_{N-1}(x_{N-1}),\\
		V_{N-1\pi}(h_{N-1}) &= \rho_{N-1}\Big(d_{N-1}(h_{N-1})(Y_N) + \pi_{R,N-1}(d_{N-1}(h_{N-1})) - Z_{N} -x_{N-1} \Big)\\
		&\leq \rho_{N-1}\big(Y_N + \pi_{R,N-1}(Y_N)-x_{N-1}\big)\\
		&= \rho_{N-1}(Y) + \pi_{R,N-1}(Y)-x_{N-1} \\
		&\leq \bar b_{N-1}(x_{N-1}).
	\end{align*}
	Now assume the assertion holds for $n+1$. For any policy $\pi \in \Pi$ it follows from the properties of $\rho_n$ and the monotonicity of $\ubar b_{n+1}$, $\bar b_{n+1}$ that 
	\begin{align*}
		V_{n\pi}(h_n) &\geq \rho_n \Big(  d_n(h_n)(Y_{n+1})+ \pi_{R,n}(d_n(h_n))- Z_{n+1}-x_n\\
		&\phantom{\geq \rho_n \Big(}\ + \beta \ubar b_{n+1}\big( x_n + Z_{n+1} - d_n(h_n)(Y_{n+1}) - \pi_{R,n}(d_n(h_n)) \big)\Big)\\
		&\geq \rho_n\Big(-\esssup(Z_{n+1})-x_n + \beta \ubar b_{n+1}\big(\esssup(Z_{n+1})+x_n\big)\Big)\\
		&\geq -\esssup(Z_{n+1})-x_n^+ + \beta \big( -\ubar c_{n+1} -a_{n+1}(\esssup(Z_{n+1})+x_n^+) \big)\\
		&= \ubar b_n(x_n),\\
		V_{n\pi}(h_n) &\leq \rho_n \Big(  d_n(h_n)(Y_{n+1})+ \pi_{R,n}(d_n(h_n))- Z_{n+1}-x_n\\
		&\phantom{\geq \rho_n \Big(}\ + \beta \bar b_{n+1}\big( x_n + Z_{n+1} - d_n(h_n)(Y_{n+1}) - \pi_{R,n}(d_n(h_n)) \big)\Big)\\
		&\leq \rho_n\Big(Y_{n+1} + \pi_{R,n}(Y_{n+1}) - x_n + \beta \bar b_{n+1}\big(x_n-Y_{n+1}-\pi_{R,n}(Y_{n+1})\big)\Big)\\
		&\leq \rho_n\Big(Y_{n+1} + \pi_{R,n}(Y_{n+1}) + x_n^- + \beta \big(\bar c_{n+1} + a_{n+1}(x_n^- +Y_{n+1}+\pi_{R,n}(Y_{n+1})\big)\Big)\\
		&= \rho_n \Big( (1+\beta a_{n+1})Y_{n+1} \Big) + (1+\beta a_{n+1}) \pi_{R,n}(Y_{n+1})  + \beta \bar c_{n+1} + (1+\beta a_{n+1})x_n^-\\
		&= \bar b_n(x_n). \qedhere
	\end{align*}
\end{proof}

Let us now consider specifically Markov policies $\pi \in \Pi^M$ of the insurance company. Then,
\[ \BB_n = \{v:\R \to \R\mid\ v \text{ decreasing and lower semicontinuous with } \ubar b_n(x) \leq v(x) \leq \bar b_n(x) \ \forall x \in \R \} \]
turns out to be set of potential value functions at time $n$ under such policies. In order to simplify the notation, we define the following operators thereon.
\begin{definition}\label{def:operators}
	For $n=0, \dots, N-1$, $v \in \BB_{n+1}$, $x \in \R$, $f \in D_n(x)$ and a Markov decision rule $d$ let
	\begin{align*}
		L_n v (x,f) &= \rho_n \Big(  f(Y_{n+1}) + \pi_{R,n}(f) - Z_{n+1} -x + \beta v\big(x + Z_{n+1} - f(Y_{n+1}) - \pi_{R,n}(f) \big)\Big),\\
		\T_{n,d} v(x) &= L_n v(x,d(x)), \\
		\T_n  v(x) &= \inf_{f \in D_n(x)}  L_n v (x,f). 
	\end{align*}
\end{definition}
 
Note that the operators are monotone in $v$. Under a Markov policy $\pi=(d_0, \dots, d_{N-1}) \in \Pi^M$, the value iteration can be expressed with the operators. In order to distinguish from the history-dependent case, we denote policy values here with $J$. Setting $J_{N\pi} \equiv 0$, we obtain for $n=0, \dots, N-1$ and $x \in \R$
\begin{align*}
	J_{n\pi}(x) &= \rho_n \Big(  d_n(x)(Y_{n+1}) + \pi_{R,n}(d_n(x)) - Z_{n+1} -x \\
	&\phantom{=\rho_n \Big(}\ + \beta J_{n+1\pi}\big(x + Z_{n+1} - d_n(x)(Y_{n+1}) - \pi_{R,n}(d_n(x)) \big)\Big)= \T_{n d_n} J_{n+1\pi}(x).
\end{align*} 
Let us further define for $n=0, \dots, N-1$ the \emph{Markov value function} 
\begin{align*}
	J_{n}(x) = \inf_{\pi \in \Pi^M} J_{n\pi}(x) , \qquad x \in \R.
\end{align*}

The next result shows that $V_n$ satisfies a Bellman equation and proves that an optimal policy exists and is Markov.
\begin{theorem}\label{thm:finite}
	For $n=0, \dots, N$ the value function $V_n$ only depends on $x_n$, i.e. $V_n(h_n)=J_n(x_n)$ for all $h_n \in \H_n$, lies in $\BB_n$ and satisfies the Bellman equation
	\begin{align*}
		J_N(x) &= 0,\\
		J_n(x) &= \T_n J_{n+1}(x), \quad x \in \R.
	\end{align*}
	Furthermore, for $n= 0, \dots, N-1$ there exist Markov decision rules $d_n^*$ with $\T_{nd_n^*} J_{n+1}=\T_n J_{n+1}$ and every sequence of such minimizers constitutes an optimal policy $\pi^*=(d_0^*,\dots,d_{N-1}^*)$.
\end{theorem}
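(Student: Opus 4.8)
The plan is to prove the three assertions --- history-independence ($V_n(h_n)=J_n(x_n)$), membership in $\BB_n$, and the Bellman equation with existence of Markovian minimizers --- simultaneously by backward induction on $n$, starting from $n=N$ where $V_N\equiv J_N\equiv 0\in\BB_N$ trivially. The heart of the argument is the induction step: assuming $V_{n+1}(h_{n+1})=J_{n+1}(x_{n+1})\in\BB_{n+1}$ for all feasible histories, I want to show the same at level $n$, identify $J_n=\T_n J_{n+1}$, and extract a measurable minimizing decision rule $d_n^*$.

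The first block of the induction step is to show $\T_n$ maps $\BB_{n+1}$ into $\BB_n$ and that the infimum in $\T_n v(x)=\inf_{f\in D_n(x)} L_n v(x,f)$ is attained by a measurable selector. For this I would invoke the standard measurable-selection machinery for semicontinuous MDPs (e.g.\ the results behind Lemma A.2.2 and the selection theorems in \citet{BaeuerleRieder2011}): by Lemma \ref{thm:model_properties} the set-valued map $x\mapsto D_n(x)$ is compact-valued and upper semicontinuous, the transition function $T_n$ is upper semicontinuous, and $(f,y)\mapsto f(y)$ is continuous; combined with the Fatou property of $\rho_n$ (which gives lower semicontinuity along dominated a.s.-convergent sequences) and lower semicontinuity of $v$, one checks that $(x,f)\mapsto L_n v(x,f)$ is lower semicontinuous on $D_n$, whence $\T_n v$ is lower semicontinuous and the infimum is attained with a measurable minimizer $d_n^*$. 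Monotonicity and translation invariance of $\rho_n$, together with the monotonicity of $v$ and the structure $T_n(x,f,y,z)=x-f(y)-\pi_{R,n}(f)+z$, give that $\T_n v$ is decreasing in $x$; the bounds $\ubar b_n\le \T_n v\le \bar b_n$ come from exactly the computation already carried out in the proof of Lemma \ref{thm:bounding} (the chain of inequalities there in fact bounds $L_n v(x,f)$ for any $v\in\BB_{n+1}$). Hence $\T_n v\in\BB_n$.

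The second block establishes the three identities. For history-independence and $V_n=J_n$: the inequality $V_n(h_n)\le J_n(x_n)$ is immediate since $\Pi^M\subseteq\Pi$. For the reverse, fix an arbitrary $\pi\in\Pi$ and use the recursive definition of $V_{n\pi}$ together with the induction hypothesis $V_{n+1\pi}(h_{n+1})\ge V_{n+1}(x_{n+1})=J_{n+1}(x_{n+1})$; monotonicity of $\rho_n$ and of $v\mapsto L_n v$ then give $V_{n\pi}(h_n)\ge L_n J_{n+1}(x_n, d_n(h_n))\ge \T_n J_{n+1}(x_n)$, so $V_n(h_n)\ge \T_n J_{n+1}(x_n)$. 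Conversely, plugging the Markov decision rule $d_n^*$ from the first block into a Markov continuation policy that is optimal (or $\varepsilon$-optimal, then let $\varepsilon\downarrow 0$) from stage $n+1$ on, and using $\T_{n d_n^*}J_{n+1}=\T_n J_{n+1}$, shows $V_n(h_n)\le J_{n\,\pi^*}(x_n)=\T_n J_{n+1}(x_n)$. Combining, $V_n(h_n)=\T_n J_{n+1}(x_n)=J_n(x_n)$, which is the Bellman equation, and since $J_{n+1}\in\BB_{n+1}$ by induction we get $V_n=J_n\in\BB_n$ from the first block. Finally, concatenating the stagewise minimizers $d_0^*,\dots,d_{N-1}^*$ yields a Markov policy $\pi^*$ with $J_{n\,\pi^*}=J_n$ for all $n$ by a straightforward downward induction using $J_{n\,\pi^*}=\T_{n d_n^*}J_{n+1\,\pi^*}=\T_{n d_n^*}J_{n+1}=\T_n J_{n+1}=J_n$; in particular $V_{0\pi^*}(x)=V_0(x)$, so $\pi^*$ is optimal.

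The main obstacle I anticipate is the lower semicontinuity of $(x,f)\mapsto L_n v(x,f)$ and the attainment of the infimum: one must handle the composition of $\rho_n$ (only Fatou, not continuous) with a term that involves $v$ evaluated at a random argument depending on $(x,f)$, and argue that along a convergent sequence $(x_k,f_k)\to(x,f)$ in $D_n$ the random variables inside $\rho_n$ converge $\P$-a.s.\ and are dominated --- domination coming from the bounds $\ubar b_{n+1}\le v\le\bar b_{n+1}$ with $\bar b_{n+1}$ linear, $\rho_n(\lambda Y_{n+1})<\infty$, and $Z_{n+1}\in L^\infty_+$ --- so that Fatou's inequality for $\rho_n$ applies. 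I would also need to be careful with the measurability caveat flagged in Remark \ref{rem:measurability}, i.e.\ that the selection argument producing $d_n^*$ is run under the standing assumption that the chosen risk measures yield measurable policy values. The remaining steps (monotonicity in $x$, the explicit bounds, and the concatenation argument) are routine given Lemmas \ref{thm:model_properties} and \ref{thm:bounding}.
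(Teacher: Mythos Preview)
Your overall architecture --- backward induction, showing $\T_n:\BB_{n+1}\to\BB_n$ with a measurable minimizer via a compactness/semicontinuity selection theorem, then sandwiching $V_n$ between $\T_n J_{n+1}$ and the value of the concatenated Markov policy --- is exactly the paper's approach, and the monotonicity, boundedness, and history-independence blocks are fine.

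There is, however, a genuine gap in your treatment of the lower semicontinuity of $(x,f)\mapsto L_n v(x,f)$, which is precisely where the paper concentrates its technical work. You assert that along $(x_k,f_k)\to(x^*,f^*)$ ``the random variables inside $\rho_n$ converge $\P$-a.s.'' and then invoke the Fatou property of $\rho_n$. They do \emph{not} converge a.s.: $\pi_{R,n}$ is only lower semicontinuous on $\F$ (Lemma~\ref{thm:model_properties}\,b)), so $\pi_{R,n}(f_k)$ need not converge, and $v$ is only lower semicontinuous, so $v\big(x_k+Z_{n+1}-f_k(Y_{n+1})-\pi_{R,n}(f_k)\big)$ need not converge either. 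Since the Fatou property of $\rho_n$ is formulated for dominated a.s.-convergent sequences, it cannot be applied directly to the original argument.

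The paper closes this gap by a monotone-replacement trick: bound $\pi_{R,n}(f_k)$ below by $\inf_{\ell\ge k}\pi_{R,n}(f_\ell)$ and the $v$-term below by $\inf_{\ell\ge k} v\big(x_\ell+Z_{n+1}-f_\ell(Y_{n+1})-\inf_{m\ge\ell}\pi_{R,n}(f_m)\big)$. Because $\rho_n$ is monotone and $v$ is decreasing, both replacements only lower the argument of $\rho_n$, giving a valid lower bound for $L_n v(x_k,f_k)$. These modified sequences \emph{are} increasing and dominated (via $\ubar b_{n+1},\bar b_{n+1}$), hence a.s.\ convergent; now the Fatou property of $\rho_n$ applies, and the limits can be compared to $L_n v(x^*,f^*)$ using $\liminf_k\pi_{R,n}(f_k)\ge\pi_{R,n}(f^*)$, lower semicontinuity of $v$, and once more that $v$ is decreasing. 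Without this step --- which exploits the monotonicity of $v$ in an essential way, not just its lower semicontinuity --- your argument does not go through.
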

\begin{proof}
	The proof is by backward induction. At time $N$ we have $V_N=J_N\equiv 0 \in \BB_N$. Assuming the assertion holds at time $n+1$, we have at time $n$:
	\begin{align*}
		V_n(h_n) &= \inf_{\pi \in \Pi} V_{n\pi}(h_n)\\
		&= \inf_{\pi \in \Pi} \rho_n \Big(  d_n(h_n)(Y_{n+1})+ \pi_{R,n}(d_n(h_n))- Z_{n+1}-x_n \\
		& \phantom{= \inf_{\pi \in \Pi} \rho_n \Big(}\ + \beta V_{n+1\pi}\big(h_n,\ d_n(h_n),\  x_n + Z_{n+1} - d_n(h_n)(Y_{n+1}) - \pi_{R,n}(d_n(h_n)) \big)\Big)\\
		&\geq \inf_{\pi \in \Pi} \rho_n \Big(  d_n(h_n)(Y_{n+1})+ \pi_{R,n}(d_n(h_n))- Z_{n+1}-x_n \\
		& \phantom{\geq  \inf_{\pi \in \Pi}\rho_n \Big(}\ + \beta V_{n+1}\big(h_n,\ d_n(h_n),\  x_n + Z_{n+1} - d_n(h_n)(Y_{n+1}) - \pi_{R,n}(d_n(h_n)) \big)\Big)\\
		&= \inf_{\pi \in \Pi} \rho_n \Big(  d_n(h_n)(Y_{n+1})+ \pi_{R,n}(d_n(h_n))- Z_{n+1}-x_n \\
		& \phantom{=  \inf_{\pi \in \Pi} \rho_n \Big(}\ + \beta J_{n+1}\big(x_n + Z_{n+1} - d_n(h_n)(Y_{n+1}) - \pi_{R,n}(d_n(h_n)) \big)\Big)\\
		&= \inf_{f \in D_n(x)}\rho_n \Big(  f(Y_{n+1}) + \pi_{R,n}(f) - Z_{n+1} -x + \beta J_{n+1}\big(x + Z_{n+1} - f(Y_{n+1}) - \pi_{R,n}(f) \big)\Big)
	\end{align*}
	The last equality holds since the minimization does not depend on the entire policy but only on $f=d_n(h_n)$. Here, objective and constraint depend on the history of the process only through $x_n$. Thus, given existence of a minimizing Markov decision rule $d_n^*$, the last line equals $\T_{n d_n^*} J_{n+1}(x_n)$. Again by the induction hypothesis there exists an optimal Markov policy $\pi^* \in \Pi^M$ such that $J_{n+1}=J_{n+1\pi^*}$. Hence, we have
	\begin{align*}
		V_n(h_n) \geq \T_{n d_n^*} J_{n+1}(x_n) = \T_{n d_n^*} J_{n+1\pi^*}(x_n) = J_{n\pi^*}(x_n) \geq J_{n}(x_n) \geq V_n(h_n).
	\end{align*}
	It remains to show the existence of a minimizing Markov decision rule $d_n^*$ and that $J_n \in \BB_n$. We want to apply Proposition 2.4.3 in \citet{BaeuerleRieder2011}. The set-valued mapping $\R \ni x\mapsto D_n(x)$ is compact-valued and upper semicontinuous by Lemma \ref{thm:model_properties} c). Next, we show that $D_n \ni (x,f) \mapsto  L_n v (x,f)$ is lower semicontinuous for every $v \in \BB_{n+1}$. Let $\{(x_k,f_k)\}_{k \in \N}$ be a convergent sequence in $D_n$ with limit $(x^*,f^*) \in D_n$. As a convergent real sequence $\{ x_k \}_{k \in \N}$ is bounded above, by say $\bar x \geq 0$.
	Convergence w.r.t.\ the metric $m$ implies pointwise convergence, i.e. $Y_{n+1} - f_k(Y_{n+1}) \to Y_{n+1}-f(Y_{n+1})$ a.s. By the properties of $\F$ it holds $0 \leq Y_{n+1}-f_k(Y_{n+1}) \leq Y_{n+1} \in L^p$ for all $k \in \N$. Note that the sequence $\{ \inf_{\ell\geq k} \pi_{R,n}(f_\ell)\}_{k \in \N}$ is increasing and bounded from above by $\pi_{R,n}(Y_{n+1})$, i.e. convergent. Now, the Fatou property of $\pi_{R,n}$ implies
	\begin{align}\label{eq:finite_proof1}
		\hat \pi= \lim_{k \to \infty} \inf_{\ell\geq k} \pi_{R,n}(f_\ell) = \liminf_{k \to \infty}\pi_{R,n}(f_k) \geq \pi_{R,n}(f^*).
	\end{align} 
	Further, we have for every $k \in \N$ by the triangular inequality
	\[ \left\lvert x_k+Z_{n+1}-f_k(Y_{n+1})-\inf_{\ell\geq k} \pi_{R,n}(f_\ell) \right\rvert \leq \bar x + Z_{n+1} +Y_{n+1}+\pi_{R,n}(Y_{n+1}) \in L^p. \]
	Again, since convergence w.r.t.\ the metric $m$ implies pointwise convergence, it holds
	\begin{align}\label{eq:eq:finite_proof2}
		x_k+Z_{n+1}-f_k(Y_{n+1})-\inf_{\ell\geq k} \pi_{R,n}(f_\ell) \to x^*+Z_{n+1}-f^*(Y_{n+1})-\hat \pi \quad \text{a.s.\ for } k \to \infty.
	\end{align}
	Lemma \ref{thm:bounding} yields that
	\begin{align*}
		\left\lvert v\big( x_k +Z_{n+1}-f_k(Y_{n+1})-\inf_{\ell\geq k} \pi_{R,n}(f_\ell)  \big) \right\rvert & \leq \max \left\{ \Big| \ubar b_n\big(x_k +Z_{n+1}-f_k(Y_{n+1})-\inf_{\ell\geq k} \pi_{R,n}(f_\ell)\big)\Big|, \right. \notag \\ 
		& \qquad \qquad \left. \Big| \bar b_n\big(x_k +Z_{n+1}-f_k(Y_{n+1})-\inf_{\ell\geq k} \pi_{R,n}(f_\ell)\big) \Big| \right\} \notag \\
		&\leq \max \left\{ \ubar c_n,  \bar c_n\right\} + a_n\big(\bar x +Z_{n+1}+Y_{n+1}+\pi_{R,n}(Y_{n+1})\big),		
	\end{align*}
	which is in $L^p$. The sequence
	\begin{align}\label{eq:eq:finite_proof3}
		\left\{ \inf_{\ell \geq k} v\big( x_\ell + Z_{n+1}(\omega) - f_\ell(Y_{n+1}(\omega))-\inf_{m \geq \ell} \pi_{R,n}(f_m) \big) \right\}_{k \in \N}
	\end{align}
	is increasing and bounded from above, i.e.\ convergent  for almost all $\omega \in \Omega$. Let us denote the almost sure limit by $V \in L^p$. The lower semicontinuity of $v$ implies a.s.
	\begin{align} \label{eq:eq:finite_proof4}
		V&=\lim_{k \to \infty}\inf_{\ell \geq k} v\big( x_\ell + Z_{n+1} - f_\ell(Y_{n+1})-\inf_{m \geq \ell} \pi_{R,n}(f_m) \big) \notag\\
		 &= \liminf_{k \to \infty} v\big( x_\ell + Z_{n+1} - f_\ell(Y_{n+1})-\inf_{m \geq \ell} \pi_{R,n}(f_m) \big) \notag \\
		& \geq v \big( x^*+Z_{n+1}-f^*(Y_{n+1})-\hat \pi \big)\notag\\
		&\geq v \big( x^*+Z_{n+1}-f^*(Y_{n+1})-\pi_{R,n}(f^*) \big).
	\end{align}
	The last inequality holds by \eqref{eq:finite_proof1} since $v$ is decreasing. Now, we get
	\begin{align*} 
		\liminf_{k \to \infty} L_n v(x_k,f_k) & = \liminf_{k \to \infty} \rho_n \Big(  f_k(Y_{n+1}) + \pi_{R,n}(f_k) - Z_{n+1} -x_k\\
		&\phantom{= \liminf_{k \to \infty} \rho_n \Big(}\ + \beta v\big(x_k + Z_{n+1} - f_k(Y_{n+1}) - \pi_{R,n}(f_k) \big)\Big) \\
		& \geq \liminf_{k \to \infty} \rho_n \Big(  f_k(Y_{n+1})+\inf_{\ell\geq k} \pi_{R,n}(f_\ell)-Z_{n+1}-x_k\\
		&\phantom{= \liminf_{k \to \infty} \rho_n \Big(}\ + \beta \inf_{\ell \geq k} v\big( x_\ell + Z_{n+1} - f_\ell(Y_{n+1})-\inf_{m \geq \ell} \pi_{R,n}(f_m) \big)\Big) \\
		&\geq \rho_n \Big(  f^*(Y_{n+1}) + \hat \pi - Z_{n+1} -x^* + \beta V\Big)\\
		& \geq \rho_n \Big( f^*(Y_{n+1}) + \pi_{R,n}(f^*) - Z_{n+1} -x^*\\
		&\phantom{\geq \rho_n \Big(}\ +\beta  v \big( x^*+Z_{n+1}-f^*(Y_{n+1})-\pi_{R,n}(f^*) \big)\Big)\\
		&= L_nv(x^*,f^*).
	\end{align*}
	The first inequality is by the monotonicity of $\rho_n$ and $v$, the second one is by the almost sure convergence of the sequences \eqref{eq:eq:finite_proof2} and \eqref{eq:eq:finite_proof3} together with the Fatou property of $\rho_n$ and the third one is by \eqref{eq:finite_proof1}, \eqref{eq:eq:finite_proof4} together with the monotonicity of $\rho_n$.  
	Hence, $D_n \ni (x,f) \mapsto  L_n v (x,f)$ is lower semicontinuous for every $v \in \BB_{n+1}$. Proposition 2.4.3 in \citet{BaeuerleRieder2011} yields the existence of a minimizing Markov decision rule $d_n^*$ and that $J_n=\T_nJ_{n+1}$ is lower semicontinuous. $J_n$ is also decreasing since the monotonicity of $\rho_n$ and $J_{n+1}$ makes $x \mapsto L_nJ_{n+1}(x,f)$ decreasing for every $f \in D_n(x)$ implying for $x_1\leq x_2$
	\[ J_n(x_1) = \inf_{f \in D_n(x_1)} L_nJ_{n+1}(x_1,f) \geq \inf_{f \in D_n(x_1)} L_nJ_{n+1}(x_2,f) \geq \inf_{f \in D_n(x_2)} L_nJ_{n+1}(x_2,f) = J_n(x_2)  \]
	since $D_n(x_1) \subseteq D_n(x_2)$. Furthermore, $J_n$ is bounded by $\ubar b_n, \bar b_n$ according to Lemma \ref{thm:bounding}, i.e.\ $J_n \in \BB_n$ and the proof is complete. 
\end{proof}

\begin{remark}
	When using general monetary risk measures, we need the claims $Y_1,Y_2,\ldots \in L^p$ to be integrable since our existence results are based on the Fatou property. If however Value-at-Risk is used in each period, we can relax this assumption and consider any real-valued random variables $Y_1,Y_2,\ldots \in L^0$. Especially, heavy-tailed claim size distributions may be used. Firstly, Assumption \ref{ass:finite} (ii) is satisfied, since for $\alpha \in (0,1)$ the quantile $\VaR_{\alpha}(Y)$ is finite for any $Y \in L^0$. And secondly, a recourse on the Fatou property is not needed as Value-at-Risk is lower semicontinuous even w.r.t.\ convergence in distribution, cf.\ the proof of Lemma 2.10 in \citet{BauerleGlauner2020b}. I.e.\ where we have used dominated convergence in the proof of Theorem \ref{thm:finite} we can dispense with the majorants and argue only with almost sure convergence.
\end{remark}

\section{Cost of capital minimization with infinite planning horizon}\label{sec:infinite}

We now consider the optimal reinsurance problem under an infinite planning horizon. This is reasonable if the terminal period is unknown or if one wants to approximate a model with a large but finite planning horizon. Solving the infinite horizon problem will turn out to be easier since it admits a stationary optimal policy. The model is now required to be stationary meaning that the disturbances $(Y_n,Z_n)_{n \in \N}$ are an i.i.d.\ sequence with representative $(Y,Z)$, the premium principle $\pi_R$ and the risk measure $\rho$ do not vary over time and we have strict discounting by $\beta \in (0,1)$. In this section, additional properties are required for the risk measure. The finiteness condition therefore simplifies.

\begin{assumption}\phantomsection \label{ass:infinite}
	\begin{enumerate}
		\item[(i)] $\rho: L^p \to \bar \R$ is law-invariant, proper and coherent risk measure with the Fatou property.
		\item[(ii)] It holds $\rho(Y) < \infty$.
	\end{enumerate}
\end{assumption}

Since the model with infinite planning horizon will be derived as a limit of the one with finite horizon, the consideration can be restricted to Markov policies $\pi=(d_1,d_2,\dots) \in \Pi^M$ due to Theorem \ref{thm:finite}. When calculating limits, it is more convenient to index the value functions with the distance to the time horizon rather than the point in time. This is also referred to as \emph{forward form} of the value iteration and only possible under Markov policies in a stationary model. The value of a policy $\pi=(d_0, d_1\dots ) \in \Pi^M$ up to a planning horizon $N \in \N$ now is
\begin{align}\label{eq:forward_policy_value}
	J_{N\pi}(x)  = \T_{d_0} \circ \dots \circ \T_{d_{N-1}} 0(x) = \T_{d_0} J_{N-1 \vec \pi}(x), \qquad x \in \R,
\end{align}
with $\vec \pi= (d_1,d_2, \dots)$. Hence, it holds $J_{n}^{\text{non-stat}} = J_{N-n}^{\text{stat}}, \ n=0,\dots,N$. Note that the operators from Definition \ref{def:operators} do not depend on the time index in a stationary model. The value function under planning horizon $N \in \N$ is given by
\begin{align*}
	J_N(x) = \inf_{\pi \in \Pi^M} J_{N\pi}(x), \qquad x \in \R.
\end{align*}
By Theorem \ref{thm:finite}, the value function satisfies the Bellman equation
\begin{align}\label{eq:infinite_finite_Bellman}
	J_N(x) =\T J_{N-1}(x) = \T^N 0(x), \qquad x \in \R.
\end{align}

When the planning horizon is infinite, we define the value of a policy $\pi \in \Pi^M$ as 
\begin{align} \label{eq:infinite_policy_value}
	J_{\infty \pi}(x)= \lim_{N \to \infty} J_{N\pi}(x), \qquad x \in \R. 
\end{align}
Hence, the optimality criterion considered in this section is
\begin{align}\label{eq:infinite_value_function}
	J_{\infty}(x) = \inf_{\pi \in \Pi^M} J_{\infty \pi}(x), \qquad x \in \R.
\end{align}

In the stationary model, also the bounding functions no longer depend on the time index.
\begin{lemma}\label{thm:bounding_infinite}
	The decreasing functions $\ubar b:\R \to \R_-$ and $\bar b:\R \to \R_+$ defined by
	\begin{align*}
		\ubar b(x) &= -\frac{x^+}{1-\beta}  - \frac{\ubar \eta}{(1-\beta)^2}, & \ubar \eta&=  \esssup(Z),\\
		\bar b(x) &= \frac{ x^-}{1-\beta} + \frac{\bar \eta}{(1-\beta)^2}, & \bar \eta&=  \rho(Y) + \pi_R(Y),
	\end{align*} 
	satisfy for every planning horizon $N \in \N_0$ and policy $\pi \in \Pi^M$
	\[ \ubar b(x) \leq J_{N\pi}(x) \leq \bar b(x), \qquad x \in \R. \]
\end{lemma}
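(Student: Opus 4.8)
The plan is to establish both inequalities simultaneously by induction on the planning horizon $N$ in the forward form \eqref{eq:forward_policy_value}; this is the stationary counterpart of Lemma~\ref{thm:bounding}, the new feature being that under Assumption~\ref{ass:infinite} the risk measure $\rho$ is coherent and hence positively homogeneous, so that the recursively defined coefficients of Lemma~\ref{thm:bounding} collapse into geometric series with sums $\tfrac{1}{1-\beta}$ and $\tfrac{1}{(1-\beta)^2}$. For $N=0$ we have $J_{0\pi}\equiv 0$, and since $\ubar\eta=\esssup(Z)\geq 0$ while $\bar\eta=\rho(Y)+\pi_R(Y)\geq\rho(0)+\pi_R(0)=0$ by monotonicity and normalization of $\rho$ and $\pi_R$, both $\ubar b\leq 0$ and $\bar b\geq 0$ hold pointwise, which is the base case (and $\ubar\eta,\bar\eta<\infty$ by $Z\in L^\infty$ and Assumptions~\ref{ass:premium} and \ref{ass:infinite}(ii), so the bounding functions are real-valued).

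For the induction step write $\pi=(d_0,\vec\pi)$ and $f=d_0(x)\in\F$, so that $J_{N+1\pi}(x)=\rho\bigl(f(Y)+\pi_R(f)-Z-x+\beta J_{N\vec\pi}(x+Z-f(Y)-\pi_R(f))\bigr)$. For the lower bound I would use $f(Y)\geq 0$, $\pi_R(f)\geq 0$, $Z\leq\ubar\eta$, and combine the induction hypothesis $J_{N\vec\pi}\geq\ubar b$ with the monotonicity of $\ubar b$ and $x+Z-f(Y)-\pi_R(f)\leq x+\ubar\eta$ to get $\beta J_{N\vec\pi}(x+Z-f(Y)-\pi_R(f))\geq\beta\,\ubar b(x+\ubar\eta)$; monotonicity, translation invariance and normalization of $\rho$ then give $J_{N+1\pi}(x)\geq-\ubar\eta-x+\beta\,\ubar b(x+\ubar\eta)$, and the elementary estimates $(x+\ubar\eta)^+\leq x^++\ubar\eta$ and $-x\geq-x^+$ reduce the right-hand side to $\ubar b(x)$ after collecting the factors $1+\tfrac{\beta}{1-\beta}=\tfrac{1}{1-\beta}$ and $1+\tfrac{\beta}{1-\beta}+\tfrac{\beta}{(1-\beta)^2}=\tfrac{1}{(1-\beta)^2}$. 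For the upper bound I would symmetrically use $f(Y)\leq Y$, $\pi_R(f)=\pi_R(Y-f(Y))\leq\pi_R(Y)$, $Z\geq 0$ and $J_{N\vec\pi}\leq\bar b$ with $\bar b$ decreasing and $x+Z-f(Y)-\pi_R(f)\geq x-Y-\pi_R(Y)$, so that monotonicity of $\rho$ together with $(x-Y-\pi_R(Y))^-\leq x^-+Y+\pi_R(Y)$ and $-x\leq x^-$ gives $J_{N+1\pi}(x)\leq\rho\bigl(\tfrac{1}{1-\beta}Y\bigr)+\tfrac{\pi_R(Y)}{1-\beta}+\tfrac{x^-}{1-\beta}+\tfrac{\beta\,\bar\eta}{(1-\beta)^2}$; positive homogeneity of the coherent $\rho$ turns $\rho(\tfrac{1}{1-\beta}Y)$ into $\tfrac{1}{1-\beta}\rho(Y)$, and collecting terms yields $\bar b(x)$.

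The only place where coherence (through positive homogeneity) is genuinely used is this last step of the upper estimate; everything else relies only on monotonicity, translation invariance and normalization of $\rho$, and the integrability of every random variable to which $\rho$ is applied is inherited from $Y\in L^p$, $Z\in L^\infty$ and the affine-in-$x^\pm$ shape of $\ubar b$ and $\bar b$, exactly as in the proof of Theorem~\ref{thm:finite}. The main obstacle is thus nothing deeper than the geometric-series bookkeeping above; equivalently, one may instead invoke Lemma~\ref{thm:bounding} for a horizon-$N$ non-stationary model carrying the stationary data and check that its coefficients, re-indexed by distance to the horizon, satisfy $a_0=\tfrac{1-\beta^N}{1-\beta}\leq\tfrac{1}{1-\beta}$, $\ubar c_0\leq\tfrac{\ubar\eta}{(1-\beta)^2}$ and $\bar c_0\leq\tfrac{\bar\eta}{(1-\beta)^2}$.
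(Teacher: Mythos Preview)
Your proposal is correct and follows essentially the same induction as the paper's own proof: base case $N=0$ is trivial, and in the induction step you bound the argument of $\rho$ from below/above using monotonicity of $\rho$ and of the bounding functions, then collect the geometric factors $1+\tfrac{\beta}{1-\beta}=\tfrac{1}{1-\beta}$ and $1+\tfrac{\beta}{1-\beta}+\tfrac{\beta}{(1-\beta)^2}=\tfrac{1}{(1-\beta)^2}$. The one small difference is in the upper bound: the paper splits $\rho(Y+\text{const}+\beta(\cdot))\leq\rho(Y)+\text{const}+\beta\rho(\cdot)$ via subadditivity and then applies translation invariance and positive homogeneity inside the second $\rho$, whereas you first observe that all randomness sits in $Y$, collect it into $\tfrac{1}{1-\beta}Y$ plus a constant, and then apply translation invariance and positive homogeneity directly---so your argument in fact avoids subadditivity altogether. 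That is a mild streamlining but not a genuinely different route.
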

\begin{proof}
	We proceed by induction. For $N=0$ there is nothing to show. Assuming the assertion holds for $N-1$, it follows
	\begin{align*}
		J_{N\pi}(x)&= \T_{d_0} J_{N-1 \vec \pi}(x) \geq \T_{d_0}\ubar b(x)\\
				&= \rho \Big(  d_0(x)(Y)+ \pi_R(d_0(x))- Z-x + \beta \ubar b \big(x + Z - d_0(x)(Y) - \pi_R(d_0(x)) \big)\Big)\\
				&\geq -\ubar\eta -x + \beta \ubar b(x+\ubar\eta) \geq -\ubar \eta - x^+ + \beta \left( \frac{-\ubar \eta}{(1-\beta)^2} - \frac{x^+ +\ubar \eta}{1-\beta} \right)\\
				&= -\ubar \eta \left(1+ \frac{\beta}{1-\beta} + \frac{\beta}{(1-\beta)^2} \right) -x^+\left(1 + \frac{\beta}{1-\beta} \right) = \ubar b(x).
	\end{align*}
	The first inequality is by the monotonicity of $\T_{d_0}$ and the second one by the monotonicity, translation invariance and normalization of $\rho$. Regarding the upper bound we have
	\begin{align*}
		J_{N\pi}(x)&= \T_{d_0} J_{N-1 \vec \pi}(x) \leq \T_{d_0}\bar b(x)\\
		&= \rho \Big(  d_0(x)(Y)+ \pi_R(d_0(x))- Z-x + \beta \bar b \big(x + Z - d_0(x)(Y) - \pi_R(d_0(x)) \big)\Big)\\
		&\leq \rho \Big( Y+ \pi_R(Y)-x + \beta \bar b \big(x -Y - \pi_R(Y) \big)\Big)\\
		&\leq \rho(Y) + \pi_R(Y) +x^- +\beta \rho\left( \frac{\bar \eta}{(1-\beta)^2} + \frac{x^- + Y + \pi_R(Y)}{1-\beta} \right)\\
		&= \bar \eta \left(1+ \frac{\beta}{1-\beta} + \frac{\beta}{(1-\beta)^2} \right) + x^-\left(1 + \frac{\beta}{1-\beta} \right) = \bar b(x).
	\end{align*}
	The first inequality is again by the monotonicity of $\T_{d_0}$, the second one by the monotonicity of $\rho$ and the third one by subadditivity and translation invariance.
\end{proof}

With analogous arguments as in the proof of Lemma \ref{thm:bounding_infinite} and using the fact that
\begin{align*}
\frac{1}{1-\beta} = 1 + \frac{\beta}{1-\beta} \qquad \text{and} \qquad 1+ \frac{\beta}{1-\beta} + \frac{\beta}{(1-\beta)^2} = \frac{1}{(1-\beta)^2}
\end{align*}
we obtain for $(x,f) \in D$ the two inequalities
\begin{align}
- \rho\big(-\ubar b(x+Z-f(Y)-\pi_R(f)) \big) &\geq \ubar b(x+\ubar \eta) \geq \frac{-\ubar \eta}{(1-\beta)^2} - \frac{x^+ +\ubar \eta}{1-\beta}\notag\\
&= -x^+ \frac{1}{\beta}\left(\frac{1}{1-\beta}-1  \right) -  \ubar \eta \frac{1}{\beta}\left(\frac{1}{(1-\beta)^2}-1 \right)\notag\\
&= \frac{1}{\beta} \left( \ubar b(x) +x^+ +\ubar\eta  \right)\geq \frac{1}{\beta} \left( \ubar b(x) +(1-\beta)x^+ +\ubar\eta  \right)\notag\\
&= \frac{1-(1-\beta)^2}{\beta}\ubar b(x), \label{eq:alpha_down}\\
\rho\big(\bar b(x+Z-f(Y)-\pi_R(f)) \big) &\leq \rho\big(\bar b(x-Y-\pi_R(Y)) \big) \leq \rho\left( \frac{\bar \eta}{(1-\beta)^2} + \frac{Y + \pi_R(Y) + x^-}{1-\beta}\right) \notag\\
&= \frac{x^-}{1-\beta} + \bar \eta \left(\frac{1}{1-\beta} + \frac{1}{(1-\beta)^2} \right)\notag\\
&= x^- \frac{1}{\beta}\left(\frac{1}{1-\beta}-1  \right) +  \bar \eta \frac{1}{\beta}\left(\frac{1}{(1-\beta)^2}-1 \right)\notag \\
&= \frac{1}{\beta} \left( \bar b(x) -x^- -\bar\eta  \right) \leq \frac{1}{\beta} \left( \bar b(x) -(1-\beta)x^- -\bar \eta  \right)\notag\\
&= \frac{1-(1-\beta)^2}{\beta}\bar b(x), \label{eq:alpha_up}
\end{align}
which will be relevant in subsequent proofs.

The next lemma shows that the infinite horizon policy values \eqref{eq:infinite_policy_value} and infinite horizon value function \eqref{eq:infinite_value_function} are well-defined.

\begin{lemma}\label{thm:convergence}
	The sequence $\{J_{N\pi}\}_{N \in \N}$ converges pointwise for every Markov policy $\pi \in \Pi^M$ and the limit function $J_{\infty \pi}$ is lower semicontinuous, decreasing and bounded by $\ubar b,\bar b$.
\end{lemma}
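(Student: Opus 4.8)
The plan is to prove the three assertions in turn: pointwise convergence of $\{J_{N\pi}\}_{N\in\N}$, and then lower semicontinuity, monotonicity, and boundedness of the limit. The monotonicity and boundedness of the limit will be immediate once convergence is established, so the crux is the convergence and the lower semicontinuity.

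For convergence, fix a Markov policy $\pi=(d_0,d_1,\dots)\in\Pi^M$. By the forward form \eqref{eq:forward_policy_value} we have $J_{N\pi}(x)=\T_{d_0}J_{N-1\vec\pi}(x)$, and iterating, $J_{N\pi}=\T_{d_0}\circ\dots\circ\T_{d_{N-1}}0$. The natural strategy is to show that $\{J_{N\pi}(x)\}_{N\in\N}$ is a Cauchy sequence for each fixed $x$, using a contraction-type estimate on the operators $\T_{d}$. Concretely, I would estimate $|J_{N+1\,\pi}(x)-J_{N\pi}(x)|$. Since $J_{N+1\,\pi}=\T_{d_0}J_{N\,\vec\pi}$ and $J_{N\pi}=\T_{d_0}J_{N-1\,\vec\pi}$, writing out $\T_{d_0}$ and using coherence of $\rho$ — in particular Lemma \ref{thm:subadditivity_complement} (the complement to subadditivity) together with subadditivity and translation invariance — one gets
\[
\big|\T_{d_0}v(x)-\T_{d_0}w(x)\big|\le \beta\,\rho\big(\big|v(\xi)-w(\xi)\big|\big),
\]
where $\xi=x+Z-d_0(x)(Y)-\pi_R(d_0(x))$ is the (random) next state; here Lemma \ref{thm:coherent_triangular} is the clean tool to extract the absolute value. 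Feeding in the uniform bounds from Lemma \ref{thm:bounding_infinite}, $\ubar b\le J_{N\,\vec\pi},J_{N-1\,\vec\pi}\le\bar b$, and using that $\ubar b,\bar b$ are comparable to the affine functions $-x^+/(1-\beta)$ and $x^-/(1-\beta)$ up to constants, the inequalities \eqref{eq:alpha_down}–\eqref{eq:alpha_up} convert $\rho(|v(\xi)-w(\xi)|)$ applied at the next state into a multiple of the bounding functions evaluated at $x$, yielding a genuine geometric decay: $\sup_x \frac{|J_{N+1\,\pi}(x)-J_{N\pi}(x)|}{\bar b(x)-\ubar b(x)}\le q\cdot(\text{same quantity for }N,N-1)$ for some $q<1$ built from $\beta$ and the factor $\frac{1-(1-\beta)^2}{\beta}$. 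This makes $\{J_{N\pi}\}$ Cauchy in the weighted supremum norm with weight $\bar b-\ubar b$, hence pointwise (in fact uniformly in this weighted sense) convergent; call the limit $J_{\infty\pi}$.

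For the limit function's properties: boundedness $\ubar b\le J_{\infty\pi}\le\bar b$ is inherited by passing to the limit in Lemma \ref{thm:bounding_infinite}. Monotonicity is inherited since each $J_{N\pi}$ is decreasing (by Theorem \ref{thm:finite}, or directly from monotonicity of $\rho$ and the operators) and a pointwise limit of decreasing functions is decreasing. Lower semicontinuity does not pass through pointwise limits in general, so here I would use that the convergence is in fact monotone or, more robustly, that it is uniform on the relevant scale: since $J_{N\pi}\to J_{\infty\pi}$ uniformly in the weighted norm and the weight $\bar b-\ubar b$ is locally bounded, the convergence is locally uniform on $\R$; a locally uniform limit of lower semicontinuous functions is lower semicontinuous, and each $J_{N\pi}$ is lower semicontinuous by Theorem \ref{thm:finite}. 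Alternatively, if one can show $\{J_{N\pi}\}$ is monotone in $N$ (which would follow if $\T_d 0\le 0$, i.e.\ $\T_d$ maps $0$ below itself, using $\bar b\ge 0$ and monotonicity — but the sign is not obvious here), one gets l.s.c.\ for free as a decreasing-in-$N$ infimum or as an increasing supremum; I expect the locally-uniform-convergence route to be the safe one.

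The main obstacle I anticipate is the contraction estimate: making precise how Lemma \ref{thm:subadditivity_complement} and coherence let one bound $\T_{d_0}v-\T_{d_0}w$ by $\beta\rho(|v(\xi)-w(\xi)|)$ when the next state $\xi$ itself appears inside $\rho$ and is not deterministic, and then correctly invoking \eqref{eq:alpha_down}–\eqref{eq:alpha_up} to close the recursion with a constant strictly below one. One must be careful that the two terms $d_0(x)(Y)+\pi_R(d_0(x))-Z-x$ cancel between $\T_{d_0}v$ and $\T_{d_0}w$ so that only the $\beta v(\xi)$ versus $\beta w(\xi)$ discrepancy remains, and that the integrability needed to apply the coherent-risk-measure inequalities is supplied by the $L^p$-bounds from Lemma \ref{thm:bounding_infinite} exactly as in the proof of Theorem \ref{thm:finite}. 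Once that estimate is in hand, everything else is bookkeeping.
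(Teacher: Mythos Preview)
Your approach is correct but takes a different route from the paper. You prove a two-sided contraction estimate for the policy operators $\T_d$ (essentially the argument of Lemma~\ref{thm:contraction} with the infimum dropped), obtaining Cauchy convergence in the weighted norm $\|\cdot\|_b$; lower semicontinuity of the limit then follows from the resulting locally uniform convergence. The paper proves instead the one-sided inequality
\[
J_{N\pi}(x)\ge J_{N-1\,\pi}(x)+(1-(1-\beta)^2)^{N-1}\,\ubar b(x)
\]
by induction, using only Lemma~\ref{thm:subadditivity_complement} and \eqref{eq:alpha_down}; iterating gives $J_{N\pi}\ge J_{m\pi}+\delta_m$ with $\delta_m\to 0$ pointwise, and a standard result for such ``weakly increasing'' sequences then yields both convergence and lower semicontinuity of the limit in one stroke. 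Your argument delivers the stronger conclusion of convergence in $\|\cdot\|_b$; the paper's is more economical (only the lower bounding function enters) and its weak-monotonicity estimate \eqref{eq:weak_monotinicity} is re-used afterwards, both to justify the existence of $J=\lim_N J_N$ and inside the proof of Theorem~\ref{thm:infinite}.

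Two points to tighten. First, in your recursion the policy shifts at each step, i.e.\ $\|J_{N+1\,\pi}-J_{N\pi}\|_b\le q\,\|J_{N\,\vec\pi}-J_{N-1\,\vec\pi}\|_b$ rather than a self-referential inequality; the argument still closes because the base case $\|J_{1\sigma}\|_b\le 1$ holds uniformly over all shifted policies $\sigma$ by Lemma~\ref{thm:bounding_infinite}, but you should say so. Second, your appeal to Theorem~\ref{thm:finite} for the lower semicontinuity of each $J_{N\pi}$ is imprecise: that theorem concerns the value functions $J_n$, not policy values under an arbitrary Markov rule $d$, and $\T_d v(x)=L v(x,d(x))$ need not be lower semicontinuous for merely measurable $d$. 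The paper's proof, however, has exactly the same tacit reliance when it invokes its limit lemma, so this is not a defect of your approach relative to the original.
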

\begin{proof}
	First, we show by induction that for all $N \in \N$
	\begin{align}\label{eq:convergence_1}
	J_{N\pi}(x) \geq J_{N-1 \pi}(x) + (1-(1-\beta)^2)^{N-1} \ubar b(x), \qquad x \in \R.
	\end{align}
	For $N=1$ we have by  Lemma \ref{thm:bounding_infinite}
	\[ J_{1\pi}(x) \geq \ubar b(x) = J_{0\pi}(x) + (1-(1-\beta)^2)^{0} \ubar b(x). \]
	For $N \geq 2$ it follows
	\begin{align*}
	J_{N\pi}(x)&=\rho \Big(  d_0(x)(Y)+ \pi_R(d_0(x))- Z-x + \beta J_{N-1\vec \pi}\big(x + Z - d_0(x)(Y) - \pi_R(d_0(x)) \big)\Big)\\
	&\geq  \rho\Big( d_0(x)(Y)+ \pi_R(d_0(x))- Z-x  + \beta J_{N-2 \vec \pi}\big(x + Z - d_0(x)(Y) - \pi_R(d_0(x))\big) \\
	&\phantom{\geq  \rho\Big(}\ + \beta (1-(1-\beta)^2)^{N-2} \ubar b \big(x + Z - d_0(x)(Y) - \pi_R(d_0(x)) \Big)\\
	&\geq J_{N-1\pi}(x)- \beta (1-(1-\beta)^2)^{N-2} \rho\Big( -\ubar b \big( x + Z - d_0(x)(Y) - \pi_R(d_0(x)) \big) \Big)\\
	&\geq J_{N-1\pi}(x) + (1-(1-\beta)^2)^{N-1} \ubar b(x).
	\end{align*} 
	The first inequality is by the induction hypothesis, the second one is by Lemma \ref{thm:subadditivity_complement} together with the positive homogeneity of $\rho$ and the third one is due to \eqref{eq:alpha_down}. Thus, \eqref{eq:convergence_1} holds. Applying this inequality repeatedly for $N, N-1, \dots, m$ yields
	\begin{align}\label{eq:weak_monotinicity}
	J_{N\pi}(x) &\geq J_{m\pi}(x) + \sum_{k=m}^{N-1} (1-(1-\beta)^2)^k \ubar b(x) \geq  J_{m\pi}(x) +  \delta_m(x),
	\end{align}
	where 
	\[  \delta_m: \R \to (-\infty,0], \quad  \delta_m(x)=   \ubar b(x) \sum_{k=m}^{\infty} (1-(1-\beta)^2)^k, \qquad m \in \N  \]
	are non-positive functions with $\lim_{m \to \infty} \delta_m(x)= 0$ for all $x \in \R$. Hence, the sequence of functions $\{J_{N\pi}\}_{N \in \N}$ is weakly increasing and by Lemma A.1.4 in \citet{BaeuerleRieder2011} convergent to a limit function $J_{\infty \pi}$ which is lower semicontinuous and decreasing. The bounds from Lemma \ref{thm:bounding_infinite} also apply to the limit.
\end{proof}

Since $\ubar b \leq0$ and $\bar b \geq0$, the finite and infinite horizon value functions are bounded in absolute value by
\[ b:\R \to \R_+, \ b(x)= \bar b(x)-\ubar b(x) = \frac{1}{1-\beta} |x| + \frac{1}{(1-\beta)^2} \eta, \qquad \eta=\rho(Y) + \pi_R(Y) + \esssup(Z) \]
and therefore contained in the set 
\[ \BB= \{v:\R \to \R\mid v \text{ lower semicontinuous and decreasing with } \lambda \in \R_+ \text{ s.t. } |v|\leq \lambda b  \}. \]
Endowing it with the weighted supremum norm $\|v\|_b = \sup_{x \in \R} \frac{|v(x)|}{b(x)}$ makes $(\BB, \|\cdot\|_b )$ a complete metric space, cf.\ Proposition 7.2.1 in \citet{HernandezLasserre1999}.

\begin{lemma}\label{thm:contraction}
	The Bellman operator $\T$ is a contraction on $\BB$ with modulus $1-(1-\beta)^2 \in (0,1)$.
\end{lemma}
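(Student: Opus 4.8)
The plan is to show that for any $v,w \in \BB$ one has $\|\T v - \T w\|_b \leq (1-(1-\beta)^2)\|v-w\|_b$. First I would fix $x \in \R$ and reduce the estimate on $|\T v(x) - \T w(x)|$ to a pointwise estimate on the operators $L v(\cdot,f)$: since $\T v(x) = \inf_{f \in D(x)} L v(x,f)$ and the infimum over $f \in D(x)$ is attained (the objective is lower semicontinuous by the argument in the proof of Theorem \ref{thm:finite} and $D(x)$ is compact by Lemma \ref{thm:model_properties}), I can pick a minimizer $f^*$ for one of the two, say for $\T w(x)$, and obtain
\[
\T v(x) - \T w(x) \leq L v(x,f^*) - L w(x,f^*),
\]
and symmetrically with the roles reversed. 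So it suffices to bound $|L v(x,f) - L w(x,f)|$ uniformly in $f \in D(x)$.

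Next I would unfold the definition of $L$ and use the structure of a coherent risk measure. Writing $W = x + Z - f(Y) - \pi_R(f)$ for the (random) next state, we have
\[
L v(x,f) - L w(x,f) = \rho\big(A + \beta v(W)\big) - \rho\big(A + \beta w(W)\big),
\]
where $A = f(Y) + \pi_R(f) - Z - x$ is common to both. By the triangular inequality for coherent risk measures (Lemma \ref{thm:coherent_triangular}) together with translation-type manipulation, or more directly by monotonicity and translation invariance, this difference is bounded in absolute value by $\beta\,\rho\big(|v(W) - w(W)|\big)$. Now $|v(W) - w(W)| \leq \|v - w\|_b \, b(W)$ pointwise, so monotonicity and positive homogeneity of $\rho$ give
\[
\big| L v(x,f) - L w(x,f) \big| \leq \beta \, \|v-w\|_b \, \rho\big(b(W)\big).
\]
The final step is to control $\rho\big(b(W)\big) = \rho\big(\bar b(W) - \ubar b(W)\big) \leq \rho(\bar b(W)) + \rho(-\ubar b(W))$ using subadditivity, and then to invoke the two key inequalities \eqref{eq:alpha_up} and \eqref{eq:alpha_down} established just before the lemma: these give $\rho(\bar b(W)) \leq \tfrac{1-(1-\beta)^2}{\beta}\bar b(x)$ and $\rho(-\ubar b(W)) \leq -\tfrac{1-(1-\beta)^2}{\beta}\ubar b(x)$, uniformly over $f \in D(x)$. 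Adding these yields $\rho(b(W)) \leq \tfrac{1-(1-\beta)^2}{\beta}\, b(x)$, and hence $|\T v(x) - \T w(x)| \leq (1-(1-\beta)^2)\,\|v-w\|_b\, b(x)$. Dividing by $b(x)$ and taking the supremum over $x$ finishes the proof; one should also note $\T$ maps $\BB$ into $\BB$ (lower semicontinuity and monotonicity follow as in Theorem \ref{thm:finite}, and the bound $|\T v| \leq \lambda' b$ follows from the same inequalities applied with $v = 0$ shifted), so the contraction statement is meaningful.

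The main obstacle I anticipate is the bookkeeping around the risk measure: getting cleanly from $\rho(A + \beta v(W)) - \rho(A + \beta w(W))$ to $\beta\,\rho(|v(W)-w(W)|)$ requires care, since $\rho$ is not additive. The cleanest route is probably Lemma \ref{thm:coherent_triangular} applied after observing that the two arguments differ by $\beta(v(W)-w(W))$ — but because $\rho$ is only translation invariant by constants, not by random variables, one cannot simply factor out the common part $A$; instead one argues $\rho(A+\beta v(W)) \leq \rho(A + \beta w(W) + \beta|v(W)-w(W)|) \leq \rho(A+\beta w(W)) + \beta\rho(|v(W)-w(W)|)$ using monotonicity first and then subadditivity plus positive homogeneity. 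The other delicate point is that all the bounds must be uniform in $f \in D(x)$, which is exactly why \eqref{eq:alpha_up} and \eqref{eq:alpha_down} were stated for arbitrary $(x,f) \in D$ rather than for a fixed contract; once those are in hand the uniformity is automatic.
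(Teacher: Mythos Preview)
Your proposal is correct and follows essentially the same route as the paper: bound $|Lv(x,f)-Lw(x,f)|$ via Lemma~\ref{thm:coherent_triangular} to obtain $\beta\,\rho(|v(W)-w(W)|)$, then use $|v-w|\leq\|v-w\|_b\, b$ together with subadditivity and the inequalities \eqref{eq:alpha_down}, \eqref{eq:alpha_up} to arrive at the modulus $1-(1-\beta)^2$. The only cosmetic difference is that the paper passes directly through the elementary inequality $|\inf_f A(f)-\inf_f B(f)|\leq\sup_f|A(f)-B(f)|$ instead of invoking attainment of the infimum, and it spells out the bound $|\T v|\leq\max\{\|v\|_b,1\}\,b$ explicitly for the endofunction part.
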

\begin{proof}
	Let $v \in \BB$. It follows as in the proof of Theorem \ref{thm:finite} that $\T v$ is lower semicontinuous and decreasing. Furthermore, 
	\begin{align*}
	|\T v(x)| &= \inf_{f \in D(x)} \left\lvert \rho \Big(  f(Y) + \pi_R(f) - Z -x +\beta v\big(x + Z - f(Y) - \pi_R(f) \big)\Big) \right\rvert\\
	& \leq \inf_{f \in D(x)} \rho \Big( \left\lvert f(Y) + \pi_R(f) - Z -x \right\rvert \Big) + \beta \rho\Big( \left\lvert v\big(x + Z - f(Y) - \pi_R(f) \big) \right\rvert \Big)\\
	& \leq \inf_{f \in D(x)}  \rho \Big( \left\lvert f(Y) + \pi_R(f) - Z -x \right\rvert \Big) + \beta \rho\Big( \|v\|_b \, b\big(x + Z - f(Y) - \pi_R(f) \big) \Big)\\
	& \leq \rho \Big( Y + \pi_R(Y) + Z +|x| \Big) + \beta \|v\|_b \, \rho\Big( \frac{\eta}{(1-\beta)^2}  + \frac{1}{1-\beta} \big( |x| + Z + Y + \pi_R(Y) \big) \Big)\\
	& \leq \max\{\|v\|_b, 1 \} \left( \eta +|x|  + \beta \Big( \frac{\eta}{(1-\beta)^2}  + \frac{|x| + \eta}{1-\beta}  \Big)  \right)\\
	&= \max\{\|v\|_b, 1 \} \, b(x).
	\end{align*}
	The first inequality is by Lemma \ref{thm:coherent_triangular} and subadditivity. Hence, the operator $\T$ is an endofunction on $\BB$ and it remains to verify the Lipschitz constant $1-(1-\beta)^2$. For $v_1,v_2 \in \BB$ it holds
	\begin{align*}
	\left \lvert \T v_1(x) - \T v_2(x) \right\rvert &\leq \sup_{f \in D(x)} \left\lvert \rho \Big(  f(Y) + \pi_R(f) - Z -x +\beta v_1\big(x + Z - f(Y) - \pi_R(f) \big)\Big)\right.\\
	&\phantom{\leq \sup_{f \in D(x)}}\ \ \left. - \rho \Big(  f(Y) + \pi_R(f) - Z -x +\beta v_2\big(x + Z - f(Y) - \pi_R(f) \big)\Big)\right\rvert\\
	& \leq \beta \sup_{f \in D(x)}  \rho \Big(\left\lvert v_1\big(x + Z - f(Y) - \pi_R(f) \big) - v_2\big(x + Z - f(Y) - \pi_R(f) \big) \right\rvert \Big)\\
	& \leq \beta \|v_1-v_2\|_b \sup_{f \in D(x)}  \rho \Big(b\big(x + Z - f(Y) - \pi_R(f) \big)\Big)\\
	& \leq \beta \|v_1-v_2\|_b \sup_{f \in D(x)}  \Big[ \rho \Big(\bar b\big(x + Z - f(Y) - \pi_R(f) \big)\Big)\\
	&\phantom{\leq \beta \|v_1-v_2\|_b \sup_{f \in D(x)} \Big[}\ + \rho \Big(-\ubar b\big(x + Z - f(Y) - \pi_R(f) \big)\Big)\Big]\\
	&\leq (1-(1-\beta)^2) \|v_1-v_2\|_b [\bar b(x)-\ubar b(x)] = (1-(1-\beta)^2) \|v_1-v_2\|_b b(x).
	\end{align*}
	Dividing by $b(x)$ and taking the supremum over $x \in \R$ on the left hand side completes the proof. Note that the second inequality holds by Lemma \ref{thm:coherent_triangular}, the fourth one due to $b =\bar b-\ubar b$ and the subadditivity of $\rho$ and the last one is by \eqref{eq:alpha_down}, \eqref{eq:alpha_up}.
\end{proof} 

Under a finite planning horizon $N \in \N$ we have characterized the value function with the Bellman equation \eqref{eq:infinite_finite_Bellman}. We will show that this is compatible with the optimality criterion of the infinite horizon model \eqref{eq:infinite_value_function}. To this end, we define the \emph{limit value function}
\[ J(x)= \lim_{N \to \infty} J_N(x), \qquad x \in \R. \]
Note that the limit exists since it follows from \eqref{eq:weak_monotinicity} that $J_N \ge J_m+\delta_m$ for all $N\ge m$ making $J_N$ weakly increasing.

\begin{theorem}\phantomsection\label{thm:infinite}
	\begin{enumerate}
		\item The limit value function $J$ is the unique fixed point of the Bellman operator $\T$ in $\BB$.
		\item There exists a Markov decision rule $d^*$ such that
		\[ \T_{d^*} J (x) = \T J(x) , \qquad x \in \R.  \]
		\item Each stationary reinsurance policy $\pi^*=(d^*,d^*,\dots)$ induced by a Markov decision rule $d^*$ as in part b) is optimal for optimization problem \eqref{eq:infinite_value_function} and it holds $J_{\infty} = J$. 
	\end{enumerate}
\end{theorem}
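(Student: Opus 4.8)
The plan is to run the standard fixed-point machinery for contractive Markov decision processes, with Lemma~\ref{thm:contraction} as the engine. For part (a): by Lemma~\ref{thm:contraction}, $\T$ is a contraction on the complete metric space $(\BB,\|\cdot\|_b)$, so Banach's fixed point theorem provides a unique fixed point $v^*\in\BB$, and since $0\in\BB$ the iterates $\T^N 0$ converge to $v^*$ in $\|\cdot\|_b$, hence pointwise because $b(x)<\infty$ for every $x$. But the finite-horizon Bellman equation \eqref{eq:infinite_finite_Bellman} says $J_N=\T^N 0$, and $J=\lim_N J_N$ pointwise by definition; therefore $J=v^*$, which lies in $\BB$ and is the unique fixed point of $\T$, i.e.\ $J=\T J$. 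This is the easy part.

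For part (b) I would re-use the existence-of-minimiser argument from the proof of Theorem~\ref{thm:finite}. Taking $v=J\in\BB$, I would rerun the lower-semicontinuity proof for $D\ni(x,f)\mapsto L v(x,f)$ essentially verbatim, the only change being that the pointwise bound on $v$ evaluated along the approximating sequence is now controlled by $\lambda b$ for a constant $\lambda$ with $|J|\le\lambda b$, and this majorant is still in $L^p$ (it is affine in $Z$, $Y$, $\pi_R(Y)$ and a bound on $\{x_k\}$). Combined with the compactness and upper semicontinuity of $x\mapsto D(x)$ (the stationary case of Lemma~\ref{thm:model_properties}~c)), Proposition 2.4.3 of \citet{BaeuerleRieder2011} then delivers a measurable minimiser $d^*$ with $\T_{d^*}J=\T J$.

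For part (c), I would first observe that the computation in the proof of Lemma~\ref{thm:contraction} goes through for $\T_{d^*}$ as well: dropping the supremum over $f$ and keeping only the action $d^*(x)$, one sees that $\T_{d^*}$ maps the norm-complete space of measurable $b$-bounded functions into itself and is a contraction there with the same modulus $1-(1-\beta)^2$. Its unique fixed point is $J$, since $\T_{d^*}J=\T J=J$ by parts (a) and (b). Hence $J_{N\pi^*}=\T_{d^*}^N 0$ satisfies $\|J_{N\pi^*}-J\|_b\le(1-(1-\beta)^2)^N\|J\|_b\to0$, so $J_{\infty\pi^*}=\lim_N J_{N\pi^*}=J$ by \eqref{eq:infinite_policy_value}. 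On the other hand, $J_N=\inf_{\pi\in\Pi^M}J_{N\pi}$ gives $J_N\le J_{N\pi}$ for every Markov policy $\pi$; letting $N\to\infty$ yields $J\le J_{\infty\pi}$, hence $J\le J_\infty\le J_{\infty\pi^*}=J$. Therefore $J_\infty=J$ and the stationary policy $\pi^*$ is optimal.

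The main obstacle I anticipate is the bookkeeping in part (c): unlike $\T$, the operator $\T_{d^*}$ need not preserve monotonicity in $x$, so one cannot simply regard it as an endofunction of $\BB$; the contraction argument must be carried out in the larger $\|\cdot\|_b$-complete space of $b$-bounded measurable functions, and identifying its fixed point with $J$ hinges essentially on the identity $\T_{d^*}J=\T J$ from part (b). Everything else is routine given the bounds already established in Lemmas~\ref{thm:bounding_infinite} and~\ref{thm:contraction}.
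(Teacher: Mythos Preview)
Your proof is correct. Parts (a) and (b) coincide with the paper's argument: both invoke Banach's fixed point theorem via Lemma~\ref{thm:contraction} and re-use the measurable-selection argument from Theorem~\ref{thm:finite}. You are slightly more explicit in (a) about why the pointwise limit $J=\lim_N J_N$ coincides with the abstract Banach fixed point (via $J_N=\T^N0$ and norm convergence), which the paper leaves implicit.

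Part (c) is where you genuinely diverge. The paper does \emph{not} exploit that $\T_{d^*}$ is itself a contraction; instead it establishes the one-sided estimate
\[
J(x)\ \geq\ J_{N\pi^*}(x)+(1-(1-\beta)^2)^N\,\ubar b(x)
\]
by induction, using Lemma~\ref{thm:subadditivity_complement} (the inequality $\rho(X+Y)\geq\rho(X)-\rho(-Y)$) together with the explicit bound \eqref{eq:alpha_down}, and then lets $N\to\infty$. Your route---observing that the contraction estimate in Lemma~\ref{thm:contraction} goes through verbatim for $\T_{d^*}$ on the larger space of measurable $b$-bounded functions, identifying its fixed point with $J$ via $\T_{d^*}J=\T J=J$, and reading off $J_{N\pi^*}=\T_{d^*}^N0\to J$---is the standard contractive-MDP argument and is shorter. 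The paper's approach has the minor advantage of staying inside $\BB$ throughout and making the rate of convergence of $J_{N\pi^*}$ to $J$ transparent through the same $\ubar b$-correction term that already appears in Lemma~\ref{thm:convergence}; your approach buys simplicity and avoids a second induction. Your caveat that $\T_{d^*}$ need not preserve monotonicity (so one must enlarge the ambient space) is exactly the right observation.
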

\begin{proof}
	\begin{enumerate}
		\item The fact that $J$ is the unique fixed point of the operator $\T$ in $\BB$ follows directly from Banach's Fixed Point Theorem using Lemma \ref{thm:contraction}.
		\item The existence of a minimizing Markov decision rule follows from the respective result in the finite horizon case, cf.\ Theorem \ref{thm:finite}. 
		\item Let $d^*$ be a Markov decision rule as in part b) and $\pi^*=(d^*,d^*,\dots)$. Then it holds 
		$$J(x)  \leq J_{\infty}(x) \leq J_{\infty \pi^*}(x), \qquad x \in \R.$$
		The second inequality is by definition. Regarding the first one note that for any $\pi \in \Pi^M$ we have $J_N \leq J_{N\pi}$ for all $N \in \N_0$. Letting $N \to \infty$ yields $J \leq J_{\infty \pi}$. Since $\pi \in \Pi^M$ was arbitrary, we get $ J \leq \inf_{\pi \in \Pi^M} J_{\infty \pi} = J_{\infty}$. It remains to show
		\begin{align}\label{eq:infinite_proof_1}
		J_{\infty \pi^*}(x) \leq J(x), \qquad x \in \R.
		\end{align} 
		To that end, we will prove by induction that for all $N \in \N_0$ and $x \in \R$
		\begin{align}\label{eq:infinite_proof_2}
		J(x) \geq J_{N\pi^*}(x) + (1-(1-\beta)^2)^N \ubar b(x).
		\end{align}
		Letting $N \to \infty$ in \eqref{eq:infinite_proof_2} yields \eqref{eq:infinite_proof_1} and concludes the proof. For $N=0$ equation \eqref{eq:infinite_proof_2} reduces to $J(x) \geq \ubar b(x)$, which holds by Lemma \ref{thm:bounding_infinite}. For $N \geq 1$ the induction hypothesis yields
		\begin{align*}
			\qquad J(x) &= \T_{d^*} J(x) \geq \T_{d^*} \left(J_{N-1\pi^*} + (1-(1-\beta)^2)^{N-1} \ubar b\right)(x) \\
			&\geq \rho \Big(  d^*(x)(Y)+ \pi_R(d^*(x))- Z-x + \beta J_{N-1\pi^*}\big(x + Z - d^*(x)(Y) - \pi_R(d^*(x)) \big)\Big) \\
			&\phantom{\geq}\ - \beta(1-(1-\beta)^2)^{N-1} \rho\Big(- \ubar b \big(x + Z - d^*(x)(Y) - \pi_R(d^*(x)\big) \Big)\\
			&\geq  J_{N\pi^*}(x) + (1-(1-\beta)^2)^{N} \ubar b(x).
		\end{align*}
		The second inequality is by Lemma \ref{thm:subadditivity_complement} together with the positive homogeneity of $\rho$ and the last one is by \eqref{eq:alpha_down}. \qedhere
	\end{enumerate}
\end{proof}

\section{Connection to profit maximization}\label{sec:connection}

In this section, we show that recursive cost of solvency capital minimization is in accordance with the primary target of any insurance company: profit maximization. For notational convenience we consider the stationary model regardless of the planning horizon and use forward indexing for the value functions. Let $\rho$ be a law-invariant, proper and coherent risk measure with the Fatou property satisfying  $\rho(Y) < \infty$. By inserting the dual representation of Proposition \ref{thm:coherent_risk_measure_dual} in the Bellman equation, we get
\begin{align*}
	J_0(x) &= 0,\\
	J_N(x) &= \inf_{f \in D(x)} \sup_{\Q \in \Qc} \E^\Q\Big[ f(Y) + \pi_R(f) - Z -x +\beta J_{N-1}\big(x + Z - f(Y) - \pi_R(f) \big) \Big], \qquad x \in \R,
\end{align*}
i.e.\ the Bellman equation of a distributionally robust Markov Decision Process as considered in \citet{BauerleGlauner2020a}. Through this connection, we can derive a closed form expression for our recursively defined optimality criterion which turns out to be related to profit maximization.

Since the claims $Y_1,Y_2,\dots$ and premium income $Z_1,Z_2,\dots$ are i.i.d.\ we can w.l.o.g.\ assume that the probability space has a product structure 
\[ (\Omega,\A,\P) = \bigotimes_{n=1}^\infty (\Omega_1,\A_1,\P_1) \]
with $(Y_n,Z_n)(\omega)=(Y_n,Z_n)(\omega_n)$ only depending on component $\omega_n$ of $\omega=(\omega_1,\omega_2,\dots) \in \Omega$. Besides, the probability measure $\P_1$ on $(\Omega_1,\A_1)$ can w.l.o.g.\ assumed to be as separable by constructing the random variables canonically since $\B(\R_+^2)$ is countably generated which makes any probability measure on it separable, see \citet[1.12]{Bogachev2007}. Furthermore, note that
\begin{align*}
\rho\big( (x+Z-f(Y)-\pi_R(f))^+ \big) &\leq x^+ + \esssup(Z) \leq -\ubar b(x),\\
\rho\big( (x+Z-f(Y)-\pi_R(f))^- \big) &\leq \rho(Y)+\pi(Y) +x^-\leq \bar b(x)
\end{align*}
holds for all $(x,f) \in D$. This implies that the assumptions of Theorem 6.1 in \cite{BauerleGlauner2020b} are satisfied and we can conclude the following:

\begin{proposition}
	For a Markov policy $\pi=(d_0,d_1,\dots) \in \Pi^M$ and $\gamma=(\gamma_0,\gamma_1,\dots)$, where $\gamma_n:D \to \Qc$ is measurable, we define the transition kernel
	\begin{align*}
	Q_n^{\pi\gamma}(B|x)= \int \1_B\big(x+Z_{n+1}(\omega_{n+1})-d_n(x)(Y_{n+1}(\omega_{n+1}))-\pi_R(d_n(x))\big) \gamma_n(\dif \omega_{n+1}|x,d_n(x)), 
	\end{align*}
	$B \in \B(\R), \ x \in \R,$ and the law of motion $\Q^{\pi\gamma}_x = \delta_x \otimes Q_0^{\pi\gamma} \otimes Q_1^{\pi\gamma} \otimes \dots$ of the surplus process induced by the Theorem of  Ionescu-Tulcea. The set of all possible laws of motion under policy $\pi \in \Pi^M$ is denoted by $\Qf_\pi = \{ \Q_x^{\pi \gamma}: \gamma \in \Gamma \}$ with $\Gamma$ being the set of all possible $\gamma$. Then it holds for $N \in \N \cup \{\infty\}$
	\begin{align}\label{eq:robust}
	J_N(x) = \inf_{\pi \in \Pi^M} \sup_{\Q \in \Qf_\pi} \E^{\Q}\left[ \sum_{n=0}^{N-1} \beta^{n} \big( d_n(x)(Y_{n+1})+\pi_R(d_n(X_n^\pi))-Z_{n+1} -X_n^\pi \big) \right].
	\end{align}
\end{proposition}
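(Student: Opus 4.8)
The plan is to recognize the right-hand side of \eqref{eq:robust} as the value function of the distributionally robust Markov Decision Process (MDP) studied in \citet{BauerleGlauner2020a}, and then to invoke the structural results proved there. The key observation, already made in the text preceding the proposition, is that inserting the dual representation of Proposition \ref{thm:coherent_risk_measure_dual} into the one-stage operator $L_n$ turns the Bellman equation $J_N = \T J_{N-1}$ into exactly the Bellman equation of a robust MDP, where the inner supremum over $\Qc$ plays the role of the adversarial choice of transition kernel. So the first step is to spell out this correspondence precisely: identify the state space $\R$, the action space $\F$, the admissible-action correspondence $D$, the one-stage cost $(x,f,x')\mapsto -x'$, and the ambiguity set — which at each state-action pair $(x,f)$ is the image of $\Qc$ under the push-forward by $\omega\mapsto x+Z(\omega)-f(Y(\omega))-\pi_R(f)$ — and to check that these are exactly the data to which the finite- and infinite-horizon results of \citet{BauerleGlauner2020a} apply.

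The second step is to verify the hypotheses needed to apply that machinery, namely a bounding/continuity structure. Here I would point to the two displayed inequalities just above the proposition, which bound $\rho\big((x+Z-f(Y)-\pi_R(f))^{\pm}\big)$ by $-\ubar b(x)$ and $\bar b(x)$ respectively, uniformly in $(x,f)\in D$; together with the semicontinuity and compactness facts from Lemma \ref{thm:model_properties} and the lower semicontinuity of $L_n v$ established inside the proof of Theorem \ref{thm:finite}, these are precisely the assumptions of Theorem 6.1 in \citet{BauerleGlauner2020b} (as the text asserts), which is the bridge theorem translating the recursively-defined $J_N$ into the expectation form. I would also note that the product-space and separability reductions stated in the text are exactly what is needed so that the Ionescu--Tulcea theorem produces a well-defined law of motion $\Q^{\pi\gamma}_x$, and that $X_n^\pi$ under $\Q^{\pi\gamma}_x$ has the distribution dictated by iterating the kernels $Q_k^{\pi\gamma}$.

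The third step is the telescoping identity itself. For a fixed Markov policy $\pi$ and a fixed choice $\gamma$ of adversarial selectors, the surplus process $X_n^\pi$ under $\Q_x^{\pi\gamma}$ is a Markov chain, and one shows by a finite backward induction on $N$ that
\[
\T_{d_0,\gamma_0}\circ\cdots\circ\T_{d_{N-1},\gamma_{N-1}} 0\,(x) = \E^{\Q_x^{\pi\gamma}}\!\left[\sum_{n=0}^{N-1}\beta^n\big(d_n(X_n^\pi)(Y_{n+1})+\pi_R(d_n(X_n^\pi))-Z_{n+1}-X_n^\pi\big)\right],
\]
where $\T_{d,\gamma}$ is the one-stage operator with the supremum over $\Qc$ replaced by the selected $\gamma$; the base case is trivial and the inductive step is the tower property for the kernel $Q_n^{\pi\gamma}$. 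Taking the supremum over $\gamma\in\Gamma$ on both sides (this is where the robust-MDP measurable-selection results of \citet{BauerleGlauner2020a} are used to justify interchanging the stage-wise suprema with the overall supremum, and to see that $\sup_\gamma \T_{d_0,\gamma_0}\circ\cdots = \T_{d_0}\circ\cdots\circ\T_{d_{N-1}}0 = J_{N\pi}$) gives \eqref{eq:robust} for finite $N$. The infinite-horizon case $N=\infty$ then follows by letting $N\to\infty$: the left side converges to $J_\infty=J$ by Theorem \ref{thm:infinite}, and on the right side the uniform bound $|d_n(X_n^\pi)(Y_{n+1})+\pi_R(d_n(X_n^\pi))-Z_{n+1}-X_n^\pi|$ is dominated by an integrable, $\beta^{-n}$-summable envelope coming from the bounding functions, so dominated convergence moves the limit inside, and monotone/dominated convergence arguments as in \citet{BauerleGlauner2020b} justify exchanging $\lim_N$ with $\inf_\pi\sup_\Q$.

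The main obstacle is the interchange of the outer $\inf_\pi\sup_\Q$ with the limit $N\to\infty$, together with the measurable-selection issues underlying the claim that the stage-wise adversarial optimization can be collected into a single law of motion $\Q\in\Qf_\pi$: this is not a routine computation but rests squarely on the robust-dynamic-programming apparatus of \citet{BauerleGlauner2020a} and the bridging Theorem 6.1 of \citet{BauerleGlauner2020b}, whose hypotheses I have argued above are met. Once those are granted, the remainder is the bookkeeping of the telescoping identity and a dominated-convergence passage to the limit.
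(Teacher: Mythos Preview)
Your proposal is correct and takes essentially the same approach as the paper: once the hypotheses of Theorem 6.1 in \citet{BauerleGlauner2020b} are verified (via the product-space/separability reduction and the two displayed bounds on $\rho\big((x+Z-f(Y)-\pi_R(f))^{\pm}\big)$), the paper simply cites that theorem in one line. Your steps~3--4 (the telescoping identity and the $\inf_\pi\sup_\Q$/limit interchange) amount to sketching the proof of the cited result rather than just invoking it, so they are correct but superfluous here.
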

\begin{proof}
	The assertion follows directly from Theorem 6.1 in \cite{BauerleGlauner2020b}.
\end{proof}

To interpret this closed form expression for the value functions let us reformulate \eqref{eq:robust} to
\begin{align*} 
J_N(x)&= - \sup_{\pi \in \Pi^M} \inf_{\Q \in \Qf_\pi} \E^{\Q} \left[ \sum_{n=0}^{N-1} \beta^{n} \Big(X_n^\pi + Z_{n+1} -d_n(X_n^\pi)(Y_{n+1}) - \pi_R(d_n(X_n^\pi))\Big) \right]\\ 
&=\sum_{n=0}^{N-1} \beta^{n}x - \sup_{\pi \in \Pi^M} \inf_{\Q \in \Qf_\pi} \E^{\Q} \left[ \sum_{n=0}^{N-1} \Big( \sum_{j=n}^{N-1} \beta^{j} \Big) \Big(Z_{n+1} - d_n(X_n^\pi)(Y_{n+1}) - \pi_R(d_n(X_n^\pi)) \Big) \right]. 
\end{align*}
The second equality can be obtained inductively by inserting the dynamics of the surplus process \eqref{eq:controlled_surplus}. Here, we have a robust maximization of total profit with higher weights on earlier periods. This addresses a fundamental criticism of cost of capital minimization as an optimality criterion for reinsurance design by \citet[Sec.\ 8.4]{Albrecher2017}. The authors state that if the minimization of the cost of capital was the driving criterion of the insurer, it would be optimal in the long run to stay out of business altogether and thereby achieve zero cost of capital. This viewpoint brings the suitability of the recursive optimality criterion into question since it would be applied over several periods. However, under a coherent risk measure the calculations above show that the recursive criterion is indeed in accordance with profit maximization.

\section{Examples}\label{sec:examples}

In this section, we present two examples where the optimal reinsurance policy can be determined analytically. Moreover, we give an example illustrating the difficulties of the general case. The first example studies Value-at-Risk as a concrete choice for the risk measure $\rho$. This has particular practical relevance with regard to Solvency II.

\begin{example}\label{ex:var}
	Let $\rho_n=\VaR_{\alpha_n}$ and the insurer's premium income be deterministic, i.e.\ $Z_{n+1} \equiv z_{n+1} \in \R_+$, $n=0,\dots,N-1$. That is, we focus on the risk in the insurance claims and neglect potential fluctuations of premium payments from the insurer's customers. Using Theorem \ref{thm:finite}, we have to solve the Bellman equation
	\begin{align*}
	J_n(x) &= \inf_{f \in D_n(x)}  \VaR_{\alpha_n} \Big(  f(Y_{n+1}) + \pi_{R,n}(f) -z_{n+1} -x  + \beta J_{n+1}\big( x + z_{n+1} - f(Y_{n+1}) - \pi_{R,n}(f)  \big)\Big)\\
	&= \inf_{f \in D_n(x)}  \VaR_{\alpha_n} \Big( \phi\big( f(Y_{n+1}) + \pi_{R,n}(f) -z_{n+1} -x  \big) \Big)\\
	&= \inf_{f \in D_n(x)}  \phi \Big( \VaR_{\alpha_n}\big( f(Y_{n+1})\big) + \pi_{R,n}(f) -z_{n+1} -x  \Big)
	\end{align*}
	Here, we used that quantiles can be interchanged with the increasing lower semicontinuous (i.e.\ left-continuous) function $\phi(x)=x+\beta J_{n+1}(-x)$, see e.g.\ Proposition 2.2 in \cite{BauerleGlauner2018}. The increasing transformation $\phi$ can be dropped and it remains to solve
	\begin{align}\label{eq:example_var}
	\inf_{f \in D_n(x)} \VaR_{\alpha_n}(f(Y_{n+1}))  + \pi_{R,n}(f).
	\end{align}
	This is a static optimal reinsurance problem with budget constraint. I.e.\ the dynamic reinsurance problem \eqref{eq:opt_crit_finite} possesses a myopic optimal policy under Value-at-Risk.
	
	Extending an approach used in \citet{ChiTan2013} and \citet{BauerleGlauner2018} to problems with constraints, \eqref{eq:example_var} can be reduced to a finite dimensional problem. By repeating the interchange argument, the minimization problem can be rewritten as 
	\[  \inf_{f \in D_n(x)} f\big(\VaR_{\alpha_n}(Y_{n+1})\big)  + \pi_{R,n}(f). \]
	Now define
	\[ h_{a}(x) = \max\left\{ \min\{ a,x\}, \ x - \VaR_{\alpha_n}(Y_{n+1})+ a \right\} , \qquad x \in \R_+, \ 0 \leq a \leq \VaR_{\alpha_n}(Y_{n+1}).  \]
	This is the retained loss function corresponding to a layer reinsurance treaty with deductible $a$ and upper bound $\VaR_{\alpha_n}(Y) - a$.   Clearly, $h_{a} \in \F$ for all $ a \in [0, \VaR_{\alpha_n}(Y_{n+1})]$. Fix $f \in \F$. We write $h_{f}$ short hand for $h_{a}$ when $a=f(\VaR_{\alpha_n}(Y_{n+1}))$. Observe that $f(\VaR_{\alpha_n}(Y_{n+1})) \in  [0, \VaR_{\alpha_n}(Y_{n+1})]$. Simply by inserting we find that
	\begin{align*}
	h_{f}(\VaR_{\alpha_n}(Y_{n+1})) = f(\VaR_{\alpha_n}(Y_{n+1})).
	\end{align*}
	Moreover, it holds $\pi_{R,n}(h_{f}) \leq \pi_{R,n}(f)$. This can be seen as follows. If $0\leq x < f(\VaR_{\alpha_n}(Y_{n+1}))$, then $h_{f}(x) = x \geq f(x)$ as $f$ is bounded by the identity. If $f(\VaR_{\alpha_n}(Y_{n+1})) \leq x < \VaR_{\alpha_n}(Y_{n+1})$, then $h_{f}(x)= f(\VaR_{\alpha_n}(Y_{n+1})) \geq f(x)$ since $f$ is increasing. Finally if $x \geq \VaR_{\alpha_n}(Y_{n+1})$, then $h_f(x) = x - \VaR_{\alpha_n}(Y_{n+1}) + f(\VaR_{\alpha_n}(Y_{n+1})) \geq f(x)$ as $f$ is $1$-Lipschitz, cf. Lemma \ref{thm:model_properties} a). Consequently, $Y_{n+1}-h_{f}(Y_{n+1}) \leq Y_{n+1}-f(Y_{n+1})$ and by monotonicity $\pi_{R,n}(h_{f}) \leq \pi_{R,n}(f)$. 
	I.e.\ $h_{f}$ is weakly better than $f$ with respect to the objective function and satisfies the constraint if $f$ does. Therefore, it suffices to consider the reduced problem
	\begin{align}\label{eq:example_var_red}
	\inf_{0 \leq a \leq \VaR_{\alpha_n}(Y_{n+1})} \ a + \pi_{R,n}(h_{a}) \quad \text{such that} \quad \pi_{R,n}(h_{a}) \leq x^+.
	\end{align}
	
	For solving \eqref{eq:example_var_red} explicitly, one has to specify the premium principle. We consider exemplarily a Wang premium principle
	\[ \pi_{R,n}(X)= (1+\theta) \int_0^{\infty} g(S_X(x))\dif x, \quad \theta \geq 0, \] 
	where we only assume that the distortion function $g$ is left-continuous. This includes any PH premium, especially the expected premium principle. In order to calculate the premium for $h_a$, we have to determine the survival function of $Y_{n+1} - h_{a}(Y_{n+1}) = \min\{ (Y_{n+1}-a)^+, \VaR_{\alpha_n}(Y_{n+1})-a \}$:
	\begin{align*}
	\P(Y_{n+1} - h_{a}(Y_{n+1}) > y) = \begin{cases}
	\P\big((Y_{n+1}-a)_+ >y\big) = S_{Y_{n+1}}(y+a),& 0 \leq y < \VaR_{\alpha_n}(Y_{n+1}) - a,\\
	0, & y \geq \VaR_{\alpha_n}(Y_{n+1}) - a.
	\end{cases}
	\end{align*}
	It follows
	\begin{align*}
	\pi_{R,n}(h_{a}) = (1+\theta) \int_0^{\VaR_{\alpha_n}(Y_{n+1}) - a} g\big( S_{Y_{n+1}}(y+a) \big) \dif y = (1+\theta) \int_a^{\VaR_{\alpha_n}(Y_{n+1}) } g\big( S_{Y_{n+1}}(y) \big) \dif y
	\end{align*}
	The derivative of the objective function $$\psi(a)= a + (1+\theta) \int_a^{\VaR_{\alpha_n}(Y_{n+1}) } g\big( S_{Y_{n+1}}(y) \big) \dif y, \qquad 0 \leq a \leq \VaR_{\alpha_n}(Y_{n+1})$$
	is given by $\psi'(a) = 1- (1+\theta) g(S_{Y_{n+1}}(a))$. Since the distortion function $g$ is left-continuous, $g \circ S_Y$ is itself a survival function. Thus, $\psi'$ is increasing and right continuous. I.e.\ its generalized inverse
	\[ \psi^{'-1}(z) = \inf \{ a \in [ 0, \VaR_{\alpha_n}(Y_{n+1}) ]: \ \psi'(a) \geq z \} \]
	is well-defined for every $z$ in the range of $\psi'$. Let us distinguish two cases:\\
	\emph{Case 1:}  $g(1-\alpha_n) < \frac{1}{1+\theta}$\\
	By definition of a quantile, we have $S_{Y_{n+1}}(\VaR_{\alpha_n}(Y_{n+1})) \leq 1-\alpha_n$. Since $g$ is increasing it follows 
	\begin{align*}
	\psi'(\VaR_{\alpha_n}(Y_{n+1}))  = 1- (1+\theta) g\big( S_{Y_{n+1}}(\VaR_{\alpha_n}(Y_{n+1})) \big) \geq 1- (1+\theta) g( 1-\alpha_n ) >0.
	\end{align*}
	Hence, $\psi$ is strictly increasing on $[\psi'^{-1}(0),\VaR_{\alpha_n}(Y_{n+1})]$.\\
	\emph{Case 2:}  $g(1-\alpha_n) \geq \frac{1}{1+\theta}$\\
	Let $a < \VaR_{\alpha_n}(Y_{n+1})$. Then $S_{Y_{n+1}}(a)>1-\alpha_n$ and as $g$ is increasing
	\begin{align*}
	\psi'(a)  = 1- (1+\theta) g\big( S_{Y_{n+1}}(a) \big) \leq 1- (1+\theta) g( 1-\alpha ) \leq 0.
	\end{align*}
	I.e.\ $\psi$ is decreasing on $[0,\VaR_{\alpha_n}(Y_{n+1})]$.\\
	Note that in practice $\alpha_n$ is chosen very close to $1$ and $\theta$ smaller than $1$, so only the first case is actually relevant. Let us define
	\[ a^*_n=\begin{cases}
	\psi'^{-1}(0), & \text{if } g(1-\alpha_n) < \frac{1}{1+\theta},\\
	\VaR_{\alpha_n}(Y_{n+1}), & \text{otherwise}.
	\end{cases} \]
	Note that $a=\VaR_{\alpha_n}(Y_{n+1})$ is always feasible for optimization problem \eqref{eq:example_var_red} and that $a \mapsto \pi_{R,n}(h_{a})$ is a continuous mapping. Therefore, taking into account the budget constraint we obtain as an optimal solution of \eqref{eq:example_var_red}:
	\[ a_n(x)= \min\{ a \in [a^*_n, \VaR_{\alpha_n}(Y_{n+1})]: \ \pi_{R,n}(h_{a}) \leq x^+ \}. \]
	Consequently, an optimal reinsurance policy $\pi=(d_0^*,\dots,d_{N-1}^*)$ is given by $d_n^*(x)= h_{ a_n(x)}$. I.e.\ it is an optimal policy to buy a layer reinsurance treaty in each period where the single parameter is chosen as close to the optimal parameter of the corresponding problem without constraint as the current surplus allows. For a low surplus, this means that the insurer should invest all capital in reinsurance to mitigate future insurance claims rather than saving capital to pay for them himself.
\end{example}

\begin{remark}
	Since the value functions are decreasing, translation invariance of the risk measure implies that for a sufficiently high initial capital the recursive cost of capital is non-positive at each stage, cf.\ Theorem \ref{thm:finite}. In this case, we get an upper bound for the probability of ruin before the planning horizon $N$ in the setting of Example \ref{ex:var}. Fix a sufficiently large $x >0$, let $f_n^*$ be the optimal reinsurance contract at time $n$ in state $x$ and
	\[ U_n=f_n^*(Y_{n+1}) + \pi_{R,n}(f_n^*) -z_{n+1} -x  + \beta J_{n+1}\big( x + z_{n+1} - f_n^*(Y_{n+1}) - \pi_{R,n}(f_n^*)\big). \]
	Then it follows
	\begin{align*}
		0 \geq J_n(x)= \VaR_{\alpha_n}(U_n) = \inf\{t \in \R: \P(U_n>t) \leq 1-\alpha_n \}
	\end{align*}
	and especially $\P(U_n>0) \leq 1-\alpha_n$. This is the probability of a ruin in period $[n,n+1)$ if the future cost of capital is taken into account. Now, Bonferroni's inequality yields that the probability of a ruin before the planning horizon $N$ is bounded by $\sum_{n=0}^{N-1} (1-\alpha_n)$ under the optimal reinsurance policy if the recursive cost of capital is non-positive at each stage. Note that we are discussing here an imputed ruin taking into account future costs. Omitting the future cost part in the Bellman equation at each stage, one finds that the probability of a ruin in the classical sense (negative surplus) under the so-obtained optimal policy is also bounded by $\sum_{n=0}^{N-1} (1-\alpha_n)$. 
\end{remark}

As a second example we consider the dynamic reinsurance problem with general risk measures but without a budget constraint. This turns out to simplify the problem significantly.

\begin{example}\label{ex:no_constraint}
	Let the model be stationary since we consider both finite and infinite planning horizons. In case of no budget constraint $D_n(x)=\F$ for all $x \in \R$, the dynamic optimization problems \eqref{eq:opt_crit_finite} and \eqref{eq:infinite_value_function} reduce to static problems and there is a constant optimal action. This can be seen by backward induction. At time $N-1$, the Bellman equation reads due to the translation invariance of $\rho$
	\[ J_{N-1}(x) = \min_{f \in \F}   \rho \Big(  f(Y) - Z\Big) + \pi_R(f) -x ,  \]
	i.e.\ the minimization does not depend on the state of the surplus process $x$. Therefore, the value function is of the form
	\[ J_{N-1}(x)= c - x\]
	with a constant $c = \min_{f \in \F}   \rho (  f(Y) - Z) + \pi_R(f)$ and the optimal decision rule $d_{N-1}^*$ is constant
	\[ d_{N-1}^*(x) = \argmin_{f \in \F}   \rho (  f(Y) - Z) + \pi_R(f) =:f^*, \qquad x \in \R. \]
	Proceeding to the previous time step, we get due to translation invariance and positive homogeneity of $\rho$
	\begin{align*}
	J_{N-2}(x) &= \min_{f \in \F}   \rho \Big(  f(Y)+\pi_R(f) - Z -x + \beta J_{N-1}\big(x + Z - f(Y) - \pi_R(f) \big)  \Big) \\
	&= \min_{f \in \F}  \rho \Big(  f(Y) + \pi_R(f) - Z-x  + \beta \big( c + f(Y) + \pi_R(f) - Z -x \big) \Big)\\
	&= \min_{f \in \F}  (1+ \beta) \Big(\rho\big( f(Y) - Z \big) + \pi_R(f)\Big) -(1+\beta) x   + \beta c.
	\end{align*}
	Again, the minimization does not depend on $x$, the value function is given by
	\[ J_{N-2}(x) = (1+2\beta) c  -(1+\beta) x \]
	and the optimal decision rule is $d_{N-2}^*\equiv f^*$. Continuing with the induction, one finds that the value functions are affine and structurally related to the bounding functions
	\begin{align*}
	J_n(x) &= c \sum_{k=0}^{N-n-1} (k+1) \beta^k - x \sum_{k=0}^{N-n-1} \beta^k, & x \in \R,\ n=0,\dots, N-1,\\
	J(x) &= \frac{c}{(1-\beta)^2} - \frac{x}{1-\beta}, & x \in \R.
	\end{align*}
	Moreover, there is a retained loss function $f^* \in \F$ which is optimal at each point in time independently from the state of the surplus process. It can be determined by solving the classical static optimal reinsurance problem
	\begin{align}\label{eq:no_constraint}
		\min_{f \in \F} \  \rho\big(f(Y)-Z\big) + \pi_R(f).
	\end{align}
	In order to prove this, it remains to verify the induction step. Due to translation invariance and positive homogeneity of $\rho$ it follows
	\begin{align*}
	J_n(x) &= \min_{f \in \F}   \rho \Big(  f(Y)+\pi_R(f) - Z -x + \beta J_{n+1}\big(x + Z - f(Y) - \pi_R(f) \big)  \Big)\\
	&= \min_{f \in \F}   \rho \Big(  f(Y)+\pi_R(f) - Z -x +  c \beta\sum_{k=0}^{N-n-2} (k+1) \beta^k \\
	& \phantom{= \min_{f \in \F}   \rho \Big(}\ + \beta \big( f(Y)+\pi_R(f) - Z -x \big)\sum_{k=0}^{N-n-2}\beta^k  \Big)\\
	&= \min_{f \in \F} \Big(\rho\big( f(Y) - Z \big) + \pi_R(f)\Big)\sum_{k=0}^{N-n-1}\beta^k  +  c \beta\sum_{k=0}^{N-n-2} (k+1) \beta^k - x \sum_{k=0}^{N-n-1}\beta^k\\
	&= c \left(\sum_{k=0}^{N-n-1}\beta^k + \beta\sum_{k=0}^{N-n-2} (k+1) \beta^k  \right) -x \sum_{k=0}^{N-n-1}\beta^k\\
	&= c \sum_{k=0}^{N-n-1} (k+1) \beta^k - x \sum_{k=0}^{N-n-1} \beta^k.
	\end{align*} 
	If the premium income is deterministic, the optimal retained loss function in \eqref{eq:no_constraint} is known for several risk measures. Under Value-at-Risk, layer reinsurance contracts are optimal as we have shown in Example \ref{ex:var} and this remains true under Expected Shortfall with the possibility of a degeneration to a stop-loss treaty, see \cite{ChiTan2013}. More complicated multi-layer treaties are known to be optimal under general distortion risk measures and Wang premium principles, see e.g.\ \citet{Cui2013} and \cite{Zhuang2016}. 
\end{example}

In examples \ref{ex:var} and \ref{ex:no_constraint} the optimal reinsurance policy turned out to be myopic. The following example shows that in general this is not the case if there is a budget constraint.

\begin{example}\label{ex:ES}
	Consider a stationary model with $N=2, \ Y \sim U(0,1), \ Z\equiv z \in \R_+$, the expected premium principle $\pi_R=(1+\theta) \E$ and Expected Shortfall as risk measure $\rho=\ES_{\alpha}$. Realistically chosen parameters satisfy $\frac{1}{1-\alpha} \geq 1+\theta$. In the terminal period, we have to solve the Bellman equation
	\begin{align}\label{eq:ES_1}
		\min_{f \in D(x)} \ES_{\alpha}(f(Y)) + (1+\theta) \E[Y-f(Y)] -z-x.
	\end{align}
	Here, we used Theorem \ref{thm:finite} and the translation invariance of $\rho$.  We show now that an optimal retained loss function can be found in the class of \emph{stop-loss} treaties $f(x) = \min\{x,a\}$, $a \in [0,1]$. For every $f \in \F$ we can choose $a_f \in [0,1]$ such that 
	\begin{align}\label{eq:ES_2}
		\min\{Y,a_f\} \leq_{cx} f(Y).
	\end{align}
	To see this, note that the mapping $[0,1] \to \R_+, \ a \mapsto \E[\min\{Y,a\}]$ is continuous by dominated convergence and $\E[\min\{Y,0\}] \leq \E[f(Y)] \leq \E[\min\{Y,1\}]$. Hence, by the intermediate value theorem there is an $a_f \in [0,1]$ such that $\E[f(Y)] = \E[\min\{Y,a_f\}]$. Let us compare the survival functions
	\begin{align*}
		S_{\min\{Y,a_f\}}(y)&=\P(\min\{Y,a_f\}>y)=\P(Y>y)\1\{a_f>y\},\\
		S_{f(Y)}(y)&= \P(f(Y)>y)\leq \P(Y>y).
	\end{align*}
	The inequality holds since $f \leq \id_{\R_+}$. Hence, we have $S_{\min\{Y,a_f\}}(y) \geq S_{f(Y)}(y)$ for $y<a_f$ and $S_{\min\{Y,a_f\}}(y) \leq S_{f(Y)}(y)$ for $y\geq a_f$. The cut criterion 1.5.17 in \citet{MuellerStoyan2002} implies $\min\{Y,a_f\} \leq_{icx} f(Y)$ and due to the equality in expectation follows \eqref{eq:ES_2}, cf.\ Theorem 1.5.3 in \citet{MuellerStoyan2002}. Note that Expected Shortfall preserves the convex order $\leq_{cx}$, see Theorem 4.3 in \citet{BaeuerleMueller2006}, and we have equality in expectation. Thus, $y \mapsto \min\{y,a_f\}$ is weakly better that $f$ w.r.t.\ the objective function \eqref{eq:ES_1} and satisfies the budget constraint if $f$ does. Therefore, the problem is reduced to finding the optimal parameter of a stop-loss treaty.
	
	Interchanging quantiles with the increasing and continuous function $y \mapsto \min\{y,a\}$ as in Example \ref{ex:var}, we can reformulate \eqref{eq:ES_1} to
	\begin{align}\label{eq:ES_3}
		\min_{a \in [0,1]} \frac{1}{1-\alpha} \int_{\alpha}^1 \min\{u,a\} \dif u + (1+\theta) \left( \frac12 - \int_0^1 \min\{  u,a\} \dif u \right) - z - x
	\end{align}
	with the constraint  
	\begin{align}\label{eq:ES_4}
		(1+\theta) \left( \frac12 - \int_0^1 \min\{  u,a\} \dif u \right) = \frac{1+\theta}{2} (1-a)^2 \leq x^+. 
	\end{align}
	If we consider \eqref{eq:ES_3} without constraint, it follows from $\frac{1}{1-\alpha} \geq 1+\theta$ that the optimal $a$ must be smaller than $\alpha$ which reduces the objective function to 
	\begin{align*}
		\min_{a \in [0,\alpha]} a + \frac{1+\theta}{2} (1-a)^2 - z - x
	\end{align*}
	with minimizer $\hat a=\min\left\{\frac{\theta}{1+\theta}, \alpha \right\}$. Since the left hand side of the constraint \eqref{eq:ES_4} is decreasing in $a$, the inequality can be transformed to
	\[ a \geq \left(1-\sqrt{\frac{2x^+}{1+\theta}}\right)^+. \]
	Consequently, the optimal parameter of the problem with constraint \eqref{eq:ES_3} is 
	\[ a^*(x) = \max\left\{\min\left\{\frac{\theta}{1+\theta}, \alpha\right\}, \left(1-\sqrt{\frac{2x^+}{1+\theta}}\right)^+ \right\}. \]
	I.e.\ the value function at time $n=1$ is
	\[ J_1(x) = a^*(x) + \frac{1+\theta}{2} (1-a^*(x))^2 - z - x \]
	resulting in a structurally different optimization problem at time $n=0$ and a non-myopic optimal reinsurance policy.
\end{example}

\noindent{\bf Acknowledgments.} The author would like to thank Nicole B\"{a}uerle for inspiring discussions and valuable comments.

\bibliographystyle{apalike}
\renewcommand{\bibfont}{\small}
\bibliography{literature_dynamic_reinsurance}

\end{document}